\let\oldsqrt\sqrt
\def\sqrt{\mathpalette\DHLhksqrt}
\def\DHLhksqrt#1#2{%
\setbox0=\hbox{$#1\oldsqrt{#2\,}$}\dimen0=\ht0
\advance\dimen0-0.2\ht0
\setbox2=\hbox{\vrule height\ht0 depth -\dimen0}%
{\box0\lower0.4pt\box2}}
\theoremstyle{definition}
\newtheorem*{duh}{Definition 3.5 (Duhamel correlation function)}
\theoremstyle{remark}
\newtheorem*{Qspin-matrix}{Remark}
\theoremstyle{plain}   
\newtheorem*{MLRO}{Theorem 2.1 (Long-range order)}
\newtheorem*{QNLRO}{Theorem 3.1 (Long-range order for the quantum nematic model)}
\newtheorem*{PLB}{Proposition 3.2}
\newtheorem*{QRP}{Lemma 3.3 (Reflection positivity)}
\newtheorem*{QRP2}{Lemma 3.4 (Reflection positivity for the quantum nematic model)}
\newtheorem*{QIRB}{Lemma 3.6}
\newtheorem*{doublecomm}{Lemma 3.7}
\begin{document}

\title{Long-range order for the spin-1 Heisenberg model with a small antiferromagnetic interaction}
\author{Benjamin Lees\footnote{b.lees@warwick.ac.uk}\\
\small{Department of Mathematics, University of Warwick,
Coventry, CV4 7AL, United Kingdom}
}
\date{}
\maketitle

\begin{abstract}
We look at the general SU(2) invariant spin-1 Heisenberg model. This family includes the well known Heisenberg ferromagnet and antiferromagnet as well as the interesting nematic (biquadratic) and the largely mysterious staggered-nematic interaction. Long range order is proved using the method of reflection positivity and infrared bounds on a purely nematic interaction. This is achieved through the use of a type of matrix representation of the interaction making clear several identities that would not otherwise be noticed. Using the reflection positivity of the antiferromagnetic interaction one can then show that the result is maintained if we also include an antiferromagnetic interaction that is sufficiently small.
\end{abstract}

\section{Introduction}

Showing the existence of phase transitions at low temperatures for Heisenberg models is a well known difficult problem. There have been several positive results in this area over the years in both the classical and quantum cases. The first rigorous proof of a phase transition in a Heisenberg model was the result of Fr\"ohlich, Simon and Spencer \cite{F-S-S} for the classical Heisenberg ferromagnet (and hence for the antiferromagnet also as it is equivalent to the ferromagnet in the classical case). The result was later extended to the quantum antiferromagnet by Dyson, Lieb and Simon \cite{D-L-S}. The case of spin-$1/2$ in dimension three was not covered, the result was extended to this case by Kennedy, Lieb and Shastry \cite{K-L-S}. The result also shows long-range order for dimension two at zero temperature. In the nematic case (also called the biquadratic interaction) there is known to be a phase transition. In the classical system there is nematic order (also called quadrupolar long-range order) at low 
temperatures, as was shown by Angelescu and Zagrebnov \cite{A-Z}. By contrast for the quantum case there 
is known to be N\'eel order, as was 
recently proved in the work of Ueltschi \cite{D2}. The paper extended and combined the works of T\'oth \cite{T} and Aizenmann and Nachtergaele \cite{A-N} who introduced probabilistic representations of some quantum Heisenberg models. This work also showed the existence of nematic order in a region with an extra ferromagnetic interaction.  All of these results apply in dimension at least three for positive temperature. In dimensions one and two there is the famous result of Mermin and Wagner \cite{M-W} that rules out a phase transition at positive temperature, this does not contradict the result for dimension two in \cite{K-L-S}. For the ground state there are some rigorous results, the work of Tanaka, Tanaka and Idogaki shows long range order for an antiferromagnetic interaction accompanied by a small enough nematic (biquadratic) interaction in dimensions two and three. In dimension three they also show long-range order in part of the nematic region investigated in \cite{D2}, these results were obtained 
independently.
The aim of this article is to show that there is also a phase transition in a region with a nematic interaction accompanied by a small antiferromagnetic interaction, this result was already expected, although an explicit proof has not been presented before. Curiously the result only shows the existence of nematic order, weaker than the expected antiferromagnetic order, this implies that there is further work to be done to strengthen the result to the full antiferromagnetic order.
\\
The positive results concerning long-range order above use the method of reflection positivity in order to obtain an infrared bound, that is, a bound on the Fourier transform of the correlation in question. One can then easily show that the correlation function does not decay (for example that $\langle S_x^3S_y^3\rangle\geq c>0$ uniformly) if the infrared bound is sufficiently strong. The infrared bound proven in \cite{D-L-S} allows to show a phase transition for the antiferromagnet. It is straightforward to extend this result to a model with an antiferromagnetic interaction accompanied by a small nematic (biquadratic) interaction. However when the nematic interaction is too large the result will no longer apply. This article will follow the approach of \cite{D-L-S}, starting with the nematic model, obtaining a lower bound that involves some other correlation functions. This bound can be shown to be positive for low temperatures by relating these correlations to probabilities in the random loop model 
introduced in \cite{A-N}. It is then easy to show (due to reflection positivity of the antiferromagnet interaction) that adding an antiferromagnetic interaction will maintain the positivity of the lower bound, providing the interaction is small enough.

\section{The Spin-1 SU(2)-invariant model}

Let $S\in\frac{1}{2}\mathbb{N}$. For a spin-$S$ model we have local Hilbert spaces $\mathcal{H}_x=\mathbb{C}^{2S+1}$. Observables are then Hermitian matrices built from linear combinations of tensor products of operators on $\otimes_{x\in\Lambda}\mathcal{H}_x$ for some set of sites $\Lambda$. Physically important observables can often be expressed in terms of \emph{spin matrices} $S^1,S^2$ and $S^3$, operators on $\mathbb{C}^{2S+1}$ that are the generators of a (2$S$+1)-dimensional irreducible unitary representation of SU(2) such that
\begin{equation}\label{spins}
 \left[S^\alpha,S^\beta\right]=i\sum_\gamma\mathcal{E}_{\alpha\beta\gamma}S^\gamma
\end{equation}
where $\alpha,\beta,\gamma\in\{1,2,3\}$ and $\mathcal{E}_{\alpha\beta\gamma}$ is the Levi-Civita symbol. Denote $\mathbf{S}=(S^1,S^2,S^3)$, its magnitude is then $\mathbf{S}\cdot\mathbf{S}=S(S+1)\mathbbm{1}$. The case $S=\frac{1}{2}$ gives the Pauli spin matrices. For $S=1$ there are several choices for spin matrices, to make things concrete we will use the following matrices for $S=1$:
\begin{equation}
S^1=\frac{1}{\sqrt{2}}\left(\begin{matrix} 0 & 1 & 0 \\  1 & 0 & 1 \\ 0 & 1 & 0  \end{matrix}\right),\quad S^2=\frac{1}{\sqrt{2}}\left(\begin{matrix} 0 & -i & 0 \\  i & 0 & -i \\  0 & i & 0  \end{matrix}\right),\quad S^3=\left(\begin{matrix} 1 & 0 & 0 \\ 0 & 0 & 0 \\ 0 & 0 & -1 \end{matrix}\right). 
\end{equation}

\par
Consider a pair $(\Lambda,\mathcal{E})$ of a lattice $\Lambda\subset\mathbb{Z}^d$ and a set of edges $\mathcal{E}$ between points in $\Lambda$. Here we will take
\begin{equation}
\Lambda =\left\{-\frac{L}{2}+1,...,\frac{L}{2}\right\}^d,
\end{equation}
for integer $L$. For the set of edges $\mathcal{E}$ we take nearest-neighbour with periodic boundary conditions. Then we take the operator $S^i_x$ for $i=1,2,3$ to be shorthand for the operator $S^i_x\otimes Id_{\Lambda\setminus\{x\}}$.

The Hamiltonian of interest is general the Spin-1 SU(2)-invariant Hamiltonian with a two-body interaction, it is known that this can be written as
\begin{equation}
H^{J_1,J_2}_{\Lambda,\mathbf{0}}=-2\sum_{\{x,y\}\in\mathcal{E}}\left(J_1\left(\mathbf{S}_x\cdot\mathbf{S}_y\right)+J_2\left(\mathbf{S}_x\cdot\mathbf{S}_y\right)^2\right).
\end{equation}
The phase diagram for this model is only partially understood. If $J_2=0$ and $J_1<0$ we have the Heisenberg antiferromagnet that is known to undergo a phase transition at low temperatures \cite{D-L-S}. As the interaction when $J_2>0$ is reflection positive it is also possible to extend this result to $J_2>0$ when the ratio $J_1/J_2$ is sufficiently small. The line $J_1=0$ has been shown to exhibit N\'eel order for low temperatures when $J_2>0$ \cite{D2}, for $J_2<0$ there are no rigorous results, it would be a challenging task to obtain results. The line $J_2=J_1/3<0$ is the AKLT model \cite{A-K-L-T}.

\begin{centering}
\setlength{\unitlength}{6mm}
\begin{figure}[t!]

\includegraphics[width=12cm]{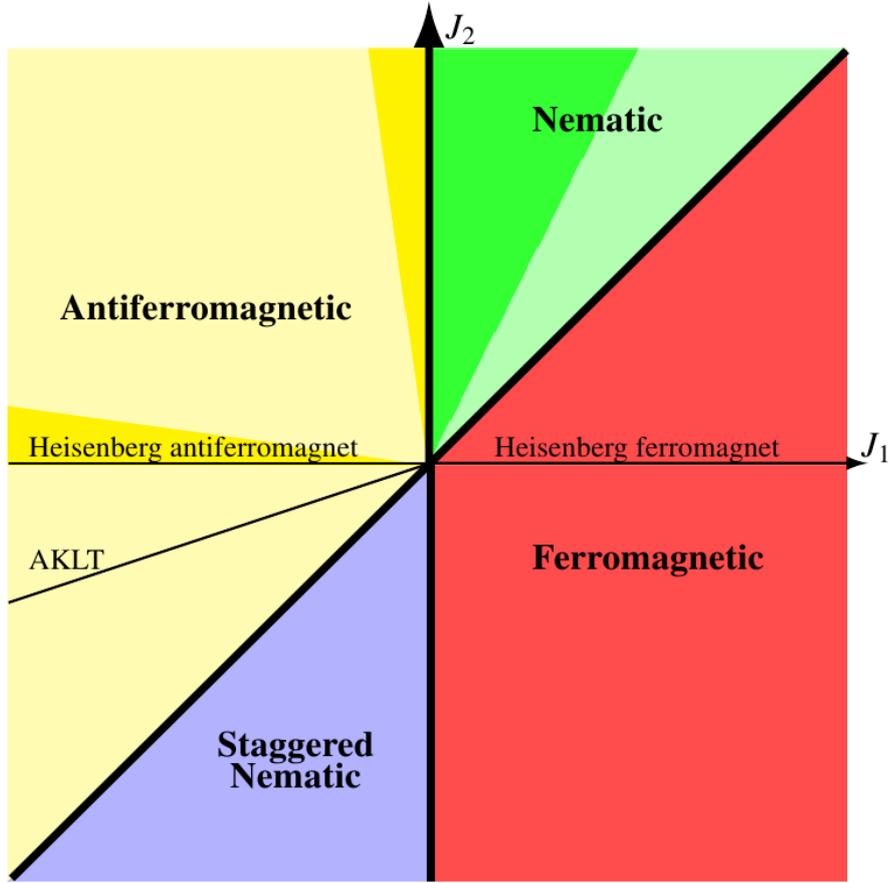}

\caption{\footnotesize{The phase diagram for the general SU(2) invariant spin-1 model. Some regions have rigorous proofs that the expected order is indeed correct. The line $J_1<0$, $J_2=0$ is the Heisenberg antiferromagnet where antiferromagnetic order has been proven \cite{D-L-S}, this region extends slightly into the dark yellow region. The dark green region has nematic order at low temperatures \cite{D2}, with N\'eel order on the line $J_2>0$, $J_1=0$, the adjacent dark yellow region also has long range order, however only the nematic correlation function has been shown not to decay, antiferromagnet order is expected here but is not yet proved.}}
\end{figure}
\end{centering}

\par
The main result of this paper is to show that there is a phase transition in this model for $J_2>0$ and $J_1<0$ with $|J_1|$ sufficiently small compared to $|J_2|$, the statement will be made precise below.
\par
First we define the partition function and Gibbs states of our model as
\begin{align}
Z^{J_1,J_2}_{\beta,\Lambda,\mathbf{0}}=&Tr e^{-\beta H^{J_1,J_2}_{\Lambda,\mathbf{0}}},
\\
\langle\cdot\rangle^{J_1,J_2}_{\beta,\Lambda,\mathbf{0}}=&\frac{1}{Z^{J_1,J_2}_{\beta,\Lambda,\mathbf{0}}}Tr \cdot e^{-\beta H^{J_1,J_2}_{\Lambda,\mathbf{0}}}.
\end{align}
Where $\beta>0$ is the inverse temperature. The quantity of interest is then the correlation
\begin{equation}
\rho(x)=\left\langle\left((S_0^3)^2-\frac{2}{3}\right)\left((S_x^3)^2-\frac{2}{3}\right)\right\rangle^{J_1,J_2}_{\beta,\Lambda,\mathbf{0}}.
\end{equation}
this correlation is specifically of interest for spin-1, in general spin-S $\frac{2}{3}$ will be replaced with $\frac{1}{3}S(S+1)$. The result is then given by the following theorem.
\\
\begin{MLRO}
Let $S=1$, $J_2>0$ and $L$ be even, $d\geq 6$. Then there exists $J_1^0<0$, $\beta_0$ and $C=C(\beta,J_1)>0$ such that if $J_1^0<J_1\leq 0$ and $\beta>\beta_0$ then
\[
\frac{1}{|\Lambda|}\sum_{x\in\Lambda}\rho(x)\geq C.
\]
for all $L$ large enough.
\end{MLRO}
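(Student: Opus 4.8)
The plan is to adapt the Dyson--Lieb--Simon reflection-positivity argument to the quadrupolar observable $Q_x := (S_x^3)^2 - \tfrac23$, for which the relevant ordering wavevector is $k=0$: since $(S^3)^2$ is insensitive to the sign of the magnetisation, nematic order is uniform even in a N\'eel background, which is also why one can only hope to capture the nematic ($k=0$) order and not the full antiferromagnetic ($k=\pi$) order. Writing $\hat Q(k) = \sum_{x\in\Lambda} e^{ik\cdot x} Q_x$ and $g(k) = \tfrac1{|\Lambda|}\langle \hat Q(k)^*\hat Q(k)\rangle$, one has $\tfrac1{|\Lambda|}\sum_x\rho(x) = g(0)/|\Lambda|$, while the sum rule reads $\tfrac1{|\Lambda|}\sum_k g(k) = \langle Q_0^2\rangle$. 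Isolating the $k=0$ term gives the exact identity
\[
\frac1{|\Lambda|}\sum_{x}\rho(x) \;=\; \langle Q_0^2\rangle \;-\; \frac1{|\Lambda|}\sum_{k\neq 0} g(k),
\]
so the theorem reduces to bounding $\langle Q_0^2\rangle$ below (elementary, since $Q_0^2$ has smallest eigenvalue $\tfrac19$, giving $\langle Q_0^2\rangle\ge\tfrac19$ uniformly) and to showing that the subtracted momentum sum is strictly smaller at low temperature in high dimension.

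The upper bound on $g(k)$ for $k\neq 0$ is where reflection positivity enters. First I would establish Gaussian domination for the purely nematic model ($J_1=0$) by exploiting the matrix/tensor form of the biquadratic interaction together with the reflection positivity of Lemmas 3.3--3.4; this yields an infrared bound (Lemma 3.6) on the Duhamel correlation function (Definition 3.5) of $\hat Q(k)$ of the form $\text{(Duhamel)}(k)\le \tfrac{\mathrm{const}}{E(k)}$ with $E(k) = \sum_{i=1}^d (1-\cos k_i)$. I would then pass from the Duhamel function to the thermal structure factor $g(k)$ through the Falk--Bruch inequality, which requires an upper bound on the double commutator $\langle [\hat Q(k)^*,[H,\hat Q(k)]]\rangle$; evaluating this is the role of Lemma 3.7, and it produces the ``other correlation functions'' -- nearest-neighbour expectations of spin and quadrupole operators. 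Feeding the infrared bound and Lemma 3.7 into Proposition 3.2 gives $g(k)\le B(k)$ with $\tfrac1{|\Lambda|}\sum_{k\neq0}B(k)$ convergent precisely when $d$ is large enough; the threshold $d\ge 6$ should be dictated by the surviving power of $E(k)$ in $B(k)$ for this quadrupolar (rather than linear) observable.

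The remaining point, and the main obstacle, is to certify that the subtracted sum is actually smaller than $\langle Q_0^2\rangle$: this forces the nearest-neighbour correlations arising through Lemma 3.7 to be controlled, which is not automatic for the biquadratic interaction. I would establish it by passing to the Aizenman--Nachtergaele / Ueltschi random loop representation, in which these correlations become loop connection probabilities; at low temperature they are close to their ground-state values, and this drives the subtracted sum below $\tfrac19\le\langle Q_0^2\rangle$ once $\beta>\beta_0$ and $d\ge 6$. Finally, to incorporate the small antiferromagnetic interaction I would use that the $J_1<0$ term is itself reflection positive (Lemma 3.3), so Gaussian domination and hence the infrared bound persist for the full Hamiltonian $H^{J_1,J_2}$ with constants varying continuously in $J_1$; since the loop-model control of the extra correlations is stable under a sufficiently small perturbation, the strict inequality survives and produces $C=C(\beta,J_1)>0$ for all $J_1^0<J_1\le 0$ with $|J_1^0|$ small. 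I expect the genuinely delicate steps to be the reflection-positive matrix decomposition of $(\mathbf{S}_x\cdot\mathbf{S}_y)^2$ underlying Lemmas 3.3--3.4 and the loop-model estimate bounding the double-commutator correlations, rather than the (standard) momentum-sum bookkeeping.
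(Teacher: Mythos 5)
Your overall architecture (matrix form of the biquadratic interaction $\Rightarrow$ reflection positivity $\Rightarrow$ Gaussian domination $\Rightarrow$ Duhamel infrared bound $\Rightarrow$ Falk--Bruch with the double commutator $\Rightarrow$ loop-model control of the resulting nearest-neighbour correlations $\Rightarrow$ perturb in $J_1$) matches the paper. But there is a genuine gap in the step where you close the argument: you use the sum rule at the origin,
\begin{equation*}
\frac{1}{|\Lambda|}\sum_x\rho(x)=\langle Q_0^2\rangle-\frac{1}{|\Lambda|}\sum_{k\neq0}\hat\rho(k),
\end{equation*}
and this version provably cannot be made positive with the bounds available. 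Indeed $\langle Q_0^2\rangle=\tfrac29$ exactly (by SU(2) invariance the one-site marginal is $\tfrac13\mathbbm{1}$), while Falk--Bruch together with Lemma 3.7 gives $\hat\rho(k)\le\sqrt{c}\,\sqrt{\varepsilon(k+\pi)/\varepsilon(k)}+\tfrac{1}{2\beta\varepsilon(k)}$ with $c=\langle S_0^1S_0^3S_{e_1}^1S_{e_1}^3\rangle=\tfrac13\mathbb{P}[E_{0,e_1}]$. The constant $K_d=(2\pi)^{-d}\int\sqrt{\varepsilon(k+\pi)/\varepsilon(k)}\,\mathrm{d}k$ satisfies $K_d\ge1$ in every dimension (symmetrize $k\mapsto k+\pi$ and apply AM--GM), and Proposition 3.2 forces $c\ge\tfrac1{12}$, so your subtracted term is at least $\sqrt{1/12}\approx0.289>\tfrac29$ even as $\beta\to\infty$ and $d\to\infty$. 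Relatedly, your claim that $d\ge6$ is ``dictated by convergence of the momentum sum'' is off: the sum converges for all $d\ge3$; the dimension restriction is purely numerical.

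The paper's fix is the Kennedy--Lieb--Shastry device: evaluate the sum rule at $y=e_1$ (and average over the $d$ directions, using $\hat\rho\ge0$), so the left side becomes $\rho(e_1)=\tfrac29\mathbb{P}[E_{0,e_1}]$ and, crucially, the integrand acquires the factor $\bigl(\tfrac1d\sum_i\cos k_i\bigr)_+$, turning $K_d$ into $I_d$, which tends to $0$ as $d\to\infty$ and is small enough for $d\ge6$. The lower bound then reads $\sqrt{\mathbb{P}}\bigl(\tfrac29\sqrt{\mathbb{P}}-I_d/\sqrt3\bigr)$, which Proposition 3.2 (the N\'eel trial-state bound $\mathbb{P}\ge\tfrac14$ --- a variational estimate, not an upper bound on $g(k)$ as your phrasing suggests) makes positive. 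Your treatment of the $J_1<0$ extension is correct and agrees with the paper.
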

The proof of the result will be in two steps, first the result will be proved for $J_1=0$, this will be the content of the next section. Second it will be shown how the result for $J_1=0$ extends to sufficiently small $J_1<0$, this should come as no surprise as the interaction is reflection positive for $J_1<0$ hence adding a small interaction in this direction should not alter the result too much.

\section{The model $J_2>0$, $J_1=0$}

We will now consider the so-called quantum nematic model $J_2>0$, $J_1=0$, the aim is to prove long-range order for this model using a similar approach to the proofs in \cite{D-L-S,F-I-L-S,F-I-L-S2,F-S-S}. To do this we will use a representation that is an analogue of the matrix representation used in \cite{A-Z}. Care must be taken as now we are working with matrices rather than vectors and so commutativity becomes an issue. We introduce an external field, $\mathbf{h}$, to the Hamiltonian
\begin{equation}\label{QNHamiltonian}
 H^{0,1}_{\Lambda,\mathbf{h}}=-2\sum_{\{x,y\}\in\mathcal{E}}(\mathbf{S}_x\cdot\mathbf{S}_y)^2-\sum_{x\in\Lambda}h_x\left((S_x^3)^2-\frac{1}{3}S(S+1)\mathbbm{1}\right).
\end{equation}
Here $\mathbbm{1}$ is the identity matrix. Equilibrium states are given by
\begin{equation}
\langle A\rangle^{0,1}_{\beta,\Lambda,\mathbf{h}}=\frac{1}{Z^{0,1}_{\beta,\Lambda,\mathbf{h}}}Tr Ae^{-\beta H^{0,1}_{\Lambda,\mathbf{h}}}.
\end{equation}
Note that the $J_2$ has been absorbed into the parameter $\beta$. Using the direct analogue of \cite{A-Z} will not work here, the reason is that reflection positivity will fail as $\overline{S^2}=-S^2$. All other attempts to directly obtain a matrix representation of the interaction $(\mathbf{S}_x\cdot\mathbf{S}_y)^2$ have also failed, however, there is a solution. We will instead use a matrix representation of a Hamiltonian that is unitarily equivalent to \eqref{QNHamiltonian}.
\par
From now on we will work with the following Hamiltonian
\begin{equation}\label{QNHamiltonian2}
  H^U_{\Lambda,\mathbf{h}}=-2\sum_{\{x,y\}\in\mathcal{E}}(S_x^1S_y^1-S_x^2S_y^2+S_x^3S_y^3)^2-\sum_{x\in\Lambda}h_x\left((S_x^3)^2-\frac{1}{3}S(S+1)\mathbbm{1}\right),
\end{equation}
and partition function
\begin{equation}\label{QNpartition}
 Z^U_{\Lambda,\beta,\mathbf{h}}=Tr e^{-\beta H^U_{\Lambda,\mathbf{h}}}.
\end{equation}
Similarly to before, equilibrium states are given by
\begin{equation}
\left\langle A\right\rangle^U_{\Lambda,\beta,\mathbf{h}}=\frac{1}{Z^U_{\Lambda,\beta,\mathbf{h}}}Tr Ae^{-\beta H^U_{\Lambda,\mathbf{h}}}.
\end{equation}

If $\Lambda$ has a bipartite structure, $\Lambda=\Lambda_A\cup\Lambda_B$, then if we define $U=\prod_{x\in\Lambda_B}e^{i\pi S_x^2}$ we have 
\begin{equation}
 U^{-1}H^U_{\Lambda,\mathbf{h}}U=H^{0,1}_{\Lambda,\mathbf{h}}.
\end{equation}
Note that this leaves $\rho(x)$ unchanged. Before the theorem we introduce an integral, it is also introduced in \cite{K-L-S},
\begin{equation}
I_d=\frac{1}{(2\pi)^d}\int_{[-\pi,\pi]^d}\sqrt{\frac{\varepsilon(k+\pi)}{\varepsilon(k)}}\left(\frac{1}{d}\sum_{i=1}^d\cos k_i\right)_+\mathrm{d}k,
\end{equation}
where
\begin{equation}
\varepsilon(k)=2\sum_{i=1}^d\left(1-\cos k_i\right).
\end{equation}
We have $I_d<\infty$ for $d\geq 3$ and it can be shown that $I_d\to 0$ as $d\to\infty$ \cite{K_L_S}. Then we have the following result:
\newline
\begin{QNLRO}
 Let $S=1$. Assume $\mathbf{h}=0$ and $L$ is even with $d\geq 3$. Then we have the bound
 \[ 
  \lim_{\beta\to\infty}\lim_{L\to\infty}\frac{1}{|\Lambda|}\sum_{x\in\Lambda}\rho(x) \geq \rho(e_1)-I_d\sqrt{\left\langle S_0^1S_0^3S_{e_1}^1S_{e_1}^3\right\rangle^U_{\mathbb{Z}^d,\infty,\mathbf{h}}}.
 \]
\end{QNLRO}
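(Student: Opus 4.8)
The plan is to follow the reflection-positivity and infrared-bound scheme of \cite{D-L-S,F-S-S}, organised around the operator $Q_x=(S_x^3)^2-\tfrac23\mathbbm 1$ and its Fourier modes $\hat Q_k=\sum_{x\in\Lambda}e^{ik\cdot x}Q_x$, which obey $\hat Q_k^*=\hat Q_{-k}$ because $Q_x$ is Hermitian. The first step is an exact sum-rule rewriting of the left-hand side. Plancherel gives both $\tfrac1{|\Lambda|^2}\langle\hat Q_0^2\rangle=\tfrac1{|\Lambda|}\sum_x\rho(x)$ and $\tfrac1{|\Lambda|}\sum_k\langle\hat Q_{-k}\hat Q_k\rangle=|\Lambda|\langle Q_0^2\rangle$, while Fourier-inverting the same two-point function against the dispersion $\varepsilon(k)$ produces the gradient sum rule $\tfrac1{|\Lambda|^2}\sum_k\varepsilon(k)\langle\hat Q_{-k}\hat Q_k\rangle=2d\,(\langle Q_0^2\rangle-\rho(e_1))$, the $k=0$ term dropping since $\varepsilon(0)=0$. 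Eliminating the constant $\langle Q_0^2\rangle$ between the last two identities and using $\tfrac{\varepsilon(k)}{2d}-1=-\tfrac1d\sum_i\cos k_i$ gives the exact identity
\[
\frac1{|\Lambda|}\sum_{x\in\Lambda}\rho(x)=\rho(e_1)-\frac1{|\Lambda|^2}\sum_{k\neq0}\Big(\frac1d\sum_{i=1}^d\cos k_i\Big)\langle\hat Q_{-k}\hat Q_k\rangle.
\]
Since $\langle\hat Q_{-k}\hat Q_k\rangle=\langle\hat Q_k^*\hat Q_k\rangle\ge0$, discarding the modes with $\sum_i\cos k_i<0$ only lowers the subtracted sum, so the right-hand side is bounded below by the same expression with $(\tfrac1d\sum_i\cos k_i)_+$ in place of $\tfrac1d\sum_i\cos k_i$. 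This is exactly where the leading term $\rho(e_1)$ and the lattice factor of $I_d$ are produced, and I would record the resulting lower bound as Proposition 3.2 before turning to the individual modes.

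It remains to estimate each $\langle\hat Q_{-k}\hat Q_k\rangle$ with $k\neq0$. I would first establish reflection positivity of the transformed Hamiltonian \eqref{QNHamiltonian2} in planes through bonds and through sites (Lemmas 3.3--3.4); this is the whole reason for passing from \eqref{QNHamiltonian} to \eqref{QNHamiltonian2}, since the obstruction $\overline{S^2}=-S^2$ is thereby removed. Reflection positivity yields Gaussian domination for the partition function perturbed by a source conjugate to $Q$, and two derivatives at zero source give the infrared bound on the Duhamel two-point function (Definition 3.5, Lemma 3.6), $(\hat Q_k,\hat Q_k)_D\le C/\varepsilon(k)$ for $k\neq0$, with $C$ fixed by the coupling in \eqref{QNHamiltonian2}.

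To convert the Duhamel inner product into the thermal expectation I would use the Falk--Bruch inequality, $\langle\hat Q_{-k}\hat Q_k\rangle\le\sqrt{(\hat Q_k,\hat Q_k)_D\cdot\tfrac12\langle[\hat Q_{-k},[H^U,\hat Q_k]]\rangle}$, up to the usual $\coth$ correction that tends to $1$ as $\beta\to\infty$. The double commutator (Lemma 3.7) is where the biquadratic interaction enters and must be evaluated explicitly; since the spin matrices do not commute this is the main obstacle, and the matrix representation is precisely the device that keeps the bookkeeping under control. I expect it to reduce to a multiple of $\varepsilon(k+\pi)\,\langle S_0^1S_0^3S_{e_1}^1S_{e_1}^3\rangle^U_{\Lambda,\beta,\mathbf 0}$, the momentum shift by $\pi$ coming from the staggering carried by $U=\prod_{x\in\Lambda_B}e^{i\pi S_x^2}$ and the surviving correlation being the off-diagonal quadrupolar one because $[S^2,(S^3)^2]\propto S^1S^3+S^3S^1$. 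Under the square root the $1/\varepsilon(k)$ of the infrared bound and the $\varepsilon(k+\pi)$ of the commutator combine, and the constants arrange so that $\langle\hat Q_{-k}\hat Q_k\rangle/|\Lambda|\le\sqrt{\varepsilon(k+\pi)/\varepsilon(k)}\,\sqrt{\langle S_0^1S_0^3S_{e_1}^1S_{e_1}^3\rangle^U_{\Lambda,\beta,\mathbf 0}}$ in the limit.

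Finally I would insert this estimate into the lower bound above, replace $\tfrac1{|\Lambda|^2}\sum_{k\neq0}$ by $\tfrac1{(2\pi)^d}\int_{[-\pi,\pi]^d}$ as $L\to\infty$, and send $\beta\to\infty$. The integrand is then exactly $(\tfrac1d\sum_i\cos k_i)_+\sqrt{\varepsilon(k+\pi)/\varepsilon(k)}$ times $\sqrt{\langle S_0^1S_0^3S_{e_1}^1S_{e_1}^3\rangle^U_{\mathbb Z^d,\infty,\mathbf 0}}$, so the integral equals $I_d\sqrt{\langle S_0^1S_0^3S_{e_1}^1S_{e_1}^3\rangle^U_{\mathbb Z^d,\infty,\mathbf 0}}$ and the stated inequality follows. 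Apart from the commutator computation, the remaining care lies in justifying the interchange of the $k$-sum with both limits, for which $I_d<\infty$ when $d\ge3$ supplies the uniform integrability, and in checking that the $\coth$ factor is harmless as $\beta\to\infty$.
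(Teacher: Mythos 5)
Your proposal follows essentially the same route as the paper: the Fourier sum rule with the positive part of $\tfrac1d\sum_i\cos k_i$, reflection positivity across bonds yielding Gaussian domination, the infrared bound on the Duhamel two-point function, and the Falk--Bruch inequality with a double commutator producing the factor $\varepsilon(k+\pi)\left\langle S_0^1S_0^3S_{e_1}^1S_{e_1}^3\right\rangle^U$ --- this is precisely the chain of Lemmas 3.3--3.7 and the concluding limits in the text. The one step you defer rather than carry out is the explicit $S=1$ double-commutator calculation of Lemma 3.7 (which rests on the special identities $S^iS^jS^i=0$ for $i\neq j$ and the mutual commutativity of the $(S^i)^2$), but you correctly anticipate both its form and the correlation function it produces, so the proposal matches the paper's proof in substance.
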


The expectations on the right of the inequality are taken in the infinite volume limit and with $\beta\to\infty$. If this lower bound is strictly positive it implies the existence of a phase transition at low temperatures, note that the lower bound is valid in any dimension $d\geq 3$, but as can be seen from equation \eqref{disclb} not in $d\leq 2$, hence no phase transition. This is consistent with the well known Mermin-Wagner theorem \cite{M-W}. Using the loop model introduced in \cite{A-N} and extended in \cite{D2} we can relate the expectations in the lower bound to the probability of the event $E_{0,e_1}$, that two nearest neighbours are in the same loop as
\begin{equation}\label{prob}
\rho(e_1)=\frac{2}{9}\mathbb{P}\left[E_{0,e_1}\right],\qquad \left\langle S_0^1S_0^3S_{e_1}^1S_{e_1}^3\right\rangle^U_{\Lambda,\beta,\mathbf{h}}=\frac{1}{3}\mathbb{P}\left[E_{0,e_1}\right].
\end{equation}
So we can write the lower bound as $\sqrt{\mathbb{P}\left[E_{0,e_1}\right]}\left(\frac{2}{9}\sqrt{\mathbb{P}\left[E_{0,e_1}\right]}-\frac{I_d}{\sqrt{3}}\right)$. This means a sufficiently large lower bound on $\mathbb{P}\left[E_{0,e_1}\right]$ will allow to show the lower bound is positive in high enough dimension.
\\
\begin{PLB}
For $d\geq 1$, $S=1$ and $L_1=...=L_d=L$ even. We have the lower bound
\begin{equation}
\mathbb{P}\left[E_{0,e_1}\right]\geq\frac{1}{4}.
\end{equation}
\end{PLB}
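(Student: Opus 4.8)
The plan is to argue entirely inside the random loop representation of \cite{A-N,D2}. First I would record the algebraic fact that, for $S=1$, the biquadratic interaction is the loop generator up to a constant: $(\mathbf{S}_x\cdot\mathbf{S}_y)^2=\mathbbm{1}+3P^{(0)}_{xy}$, where $P^{(0)}_{xy}$ projects onto the two-site singlet and $Q_{xy}:=3P^{(0)}_{xy}$ is precisely the operator that inserts a ``double bar'' in the Aizenman--Nachtergaele expansion. Thus $e^{-\beta H^{0,1}_{\Lambda,\mathbf{0}}}$ (equivalently $e^{-\beta H^{U}_{\Lambda,\mathbf{0}}}$, since $U$ changes neither $\rho$ nor the loop structure) becomes a Poisson gas of double bars on $\mathcal{E}\times[0,\beta]$, each configuration weighted by $3^{\#\mathrm{loops}}$, with an independent uniform colour in $\{-1,0,+1\}$ per loop. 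In this language $E_{0,e_1}$ is exactly the event that the space--time points $(0,0)$ and $(e_1,0)$ lie on a common loop, and averaging over colours reproduces the identities \eqref{prob}. In particular $\mathbb{P}[E_{0,e_1}]\geq\tfrac14$ is equivalent to the assertion that $(0,0)$ and $(e_1,0)$ receive the same colour with probability at least $\tfrac12$, and, using \eqref{prob} together with the SU(2) invariance of $H^{0,1}$, to a ground-state energy bound of the form $\langle\mathbf{S}_0\cdot\mathbf{S}_{e_1}\rangle^{0,1}\leq-\tfrac12$; these equivalent forms serve as useful consistency checks.

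To produce the constant $\tfrac14$ I would condition on all links lying off the edge $b=\{0,e_1\}$ and study the one-dimensional process of double bars that remains on $b$. Given the environment, whether $(0,0)$ and $(e_1,0)$ end up on the same loop is decided by how these bars splice the two time-lines near time $0$; the relevant combinatorics is small (the four ``ends'' $(0,0^{\pm}),(e_1,0^{\pm})$ admit only three pairings), and inserting a bar toggles the local loop count by one, so that the disconnected resolution carries an extra factor $3$ relative to the connected one. The heuristic $\tfrac14=\tfrac1{1+3}$ should then emerge from the competition between this fugacity-$3$ weighting, which favours pulling $(0,0)$ and $(e_1,0)$ apart, and the entropy of the bars on $b$, which favours joining them.

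The hard part, and the reason a purely local estimate is not enough, is that the naive single-edge bound degrades with the dimension: when the per-edge bar intensity is small (as it must be for the energy to stay extensive in large $d$), the probability of joining $(0,0)$ to $(e_1,0)$ \emph{directly} through $b$ tends to $0$, so conditioning on a worst-case environment cannot give a uniform $\tfrac14$. What rescues the bound is that in exactly this regime the two points are already joined \emph{indirectly}, through the abundance of alternative loop paths using the other $2d-1$ incident edges, so the direct and indirect mechanisms must be balanced rather than treated separately. I therefore expect the decisive step to be a global input about the loop measure --- a monotonicity/FKG property or an Aizenman-type switching identity for the fugacity-$3$ ensemble --- that allows one to retain the indirect connections and convert the local fugacity computation into the dimension-free lower bound $\mathbb{P}[E_{0,e_1}]\geq\tfrac14$.
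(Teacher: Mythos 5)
Your proposal does not close, and the step it leaves open is precisely the decisive one. The conditional single-edge analysis of your second paragraph is, as you yourself note, not uniform in the environment, and the ``global input'' you invoke to repair it --- an FKG/monotonicity property or a switching identity for the fugacity-$3$ loop measure --- is neither supplied by you nor available in the literature: no correlation inequality of that kind is known for these loop soups, and nothing of the sort is used anywhere in this paper. So the plan reduces to ``a local computation plus an unproven global lemma,'' which is not a proof. There is also a small but relevant slip in your first paragraph: the ground-state-energy reformulation of $\mathbb{P}[E_{0,e_1}]\geq\tfrac14$ involves the \emph{biquadratic} correlation, not the bilinear one. Expanding $(\mathbf{S}_0\cdot\mathbf{S}_{e_1})^2$ into its nine products and using the identities \eqref{prob} (three squared terms with expectation $\tfrac29\mathbb{P}[E_{0,e_1}]+\tfrac49$ and six cross terms with expectation $\tfrac13\mathbb{P}[E_{0,e_1}]$) gives $\langle(\mathbf{S}_0\cdot\mathbf{S}_{e_1})^2\rangle=\tfrac43\left(2\mathbb{P}[E_{0,e_1}]+1\right)$, so the bound $\mathbb{P}[E_{0,e_1}]\geq\tfrac14$ is equivalent to $\langle(\mathbf{S}_0\cdot\mathbf{S}_{e_1})^2\rangle\geq 2$ in the ground state, not to $\langle\mathbf{S}_0\cdot\mathbf{S}_{e_1}\rangle\leq-\tfrac12$.

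The actual proof is a two-line variational argument and requires no analysis of the loop measure at all. By the variational principle \eqref{trialineq}, the ground-state energy is at most $\langle\psi,H^{0,1}_{\Lambda,\mathbf{0}}\psi\rangle$ for any trial state $\psi$; taking $\psi$ to be the N\'eel state and using that for $S=1$ one has $(\mathbf{S}_x\cdot\mathbf{S}_y)^2=P_{xy}+1$ with $\tfrac13P_{xy}$ the singlet projector, each bond contributes $2$, so the N\'eel energy is $-4d|\Lambda|$. Comparing with the loop-representation expression $-8d|\Lambda|\left(2\mathbb{P}[E_{0,e_1}]+1\right)/3$ for the true ground-state energy yields $\mathbb{P}[E_{0,e_1}]\geq\tfrac14$ at once; the bound is dimension-free simply because both energies scale with the number of edges $d|\Lambda|$, so the ``entropy versus fugacity'' competition and the direct-versus-indirect connection dichotomy you worry about never arise. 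You had the essential reformulation in hand as a ``consistency check''; the missing idea was to feed a trial state into it rather than to attempt a local percolation-style estimate.
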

Putting this bound into the theorem and computing $I_d$ for various $d$ shows that there is a positive lower bound (and hence phase transition) for $d\geq 6$.
\begin{proof}
For any state $\psi\in\otimes_{x\in\Lambda}\mathbb{C}^3$ we have that in the ground state
\begin{equation}\label{trialineq}
\langle H^{0,1}_{\Lambda,\mathbf{0}}\rangle^{0,1}_{\Lambda,\infty,\mathbf{h}}\leq\langle \psi,H^{0,1}_{\Lambda,\mathbf{0}}\psi\rangle.
\end{equation}
We pick the N\'eel state, $\psi_{N\acute{e}el}$ as a trial state
\begin{equation}
\psi_{N\acute{e}el}=\otimes_{x\in\Lambda}|(-1)^x\rangle.
\end{equation}
We have used Dirac notation here where $S^3|a\rangle=a|a\rangle$. For the left of \eqref{trialineq} we recall that for $x$ and $y$ nearest neighbours $(\mathbf{S}_x\cdot\mathbf{S}_y)^2$ has three terms of the form $(S_x^i)^2(S_y^i)^2$, having expectation $\frac{2}{9}\mathbb{P}\left[E_{0,e_1}\right]+\frac{4}{9}$ independent of $i$ and six terms of the form $S_x^iS_x^jS_y^iS_y^j$ having expectation $\frac{1}{3}\mathbb{P}\left[E_{0,e_1}\right]$ independent of $i$ and $j$ (this is due to the equivalent roles of $i$ and $j$ coupled with $(S_x^iS_x^j)^T=\pm S_x^jS_x^i$ where the sign depends on the value of $i$ or $j$). This gives 
\begin{equation}
\langle H^{0,1}_{\Lambda,\mathbf{0}}\rangle^{0,1}_{\mathbb{Z}^d,\infty,
\mathbf{h}}=-2\sum_{\{x,y\}\in\Lambda}\left[3\left(\frac{2}{9}\mathbb{P}\left[E_{0,e_1}\right]+\frac{4}{9}\right)+\frac{6}{3}\mathbb{P}\left[E_{0,e_1}\right]\right]=-8d|\Lambda|\frac{2\mathbb{P}\left[E_{0,e_1}\right]+1}{3}.
\end{equation}
For the right side of \eqref{trialineq} it can be checked that, for $S=1$, $(\mathbf{S}_x\cdot\mathbf{S}_y)^2=P_{xy}+1$ where $\frac{1}{3}P_{xy}$ is the projector onto the spin singlet. Hence
\begin{equation}
\langle 1,-1 |(\mathbf{S}_x\cdot\mathbf{S}_y)^2|1,-1\rangle=\langle 1,-1|P_{x,y}+1|1,-1\rangle=2,
\end{equation}
from this we see that the right side of \eqref{trialineq} is $-4d|\Lambda|$. Inserting each of these values into \eqref{trialineq} and rearranging gives the claim of the proposition.
\end{proof}
Note that if one could find a state with lower energy than the N\'eel state this lower bound could be improved and hence potentially the theorem strengthened to show phase transitions in lower dimensions. However the problem of finding lower energy states does not appear an easy one.

 The rest of the section will be dedicated to the proof of theorem 3.1. We will proceed with calculations for general spin until it becomes necessary to restrict to the case $S=1$. Fortunately for this Hamiltonian we can find a matrix representation. Define $Q_x$ as
\begin{equation}\label{Qspin-matrix}
 Q_x=\left(\begin{matrix}
  (S_x^1)^2-\frac{1}{3}S(S+1) & S_x^1 iS_x^2 & S_x^1 S_x^3 \\
  S_x^1 iS_x^2 & (S_x^2)^2-\frac{1}{3}S(S+1) & iS_x^2 S_x^3 \\
  S_x^1 S_x^3 & iS_x^2 S_x^3 & (S_x^3)^2-\frac{1}{3}S(S+1)
 \end{matrix}\right).
\end{equation}
We introduce the operation $\mathcal{TR}$, which is the sum of diagonal entries of matrices of the form of $Q_x$, however this `trace' will return an operator, not a number, so we distinguish it from the normal trace. As an example we see that $\mathcal{TR}(Q_x)=0$, the zero matrix.  We have the relation (note that below we do \emph{not} mean `normal' matrix multiplication, we only write $Q_xQ_y$ for convenience as explained in the remark).
\begin{equation}\label{matrixid}
 \mathcal{TR} (Q_xQ_y)=(S_x^1S_y^1-S_x^2S_y^2+S_x^3S_y^3)^2-\frac{1}{3}S^2(S+1)^2\mathbbm{1}.
\end{equation}
\begin{Qspin-matrix}
 We must be careful here, as we are working with a matrix of matrices, as to what we mean by multiplication. The representation \eqref{Qspin-matrix} is not at all essential to the proof, the advantage of using it is that once \eqref{matrixid} has been verified other relations can be stated much more concisely and clearly and easily checked, these relations are not at all obvious or easy to come up with without using \eqref{matrixid}. 
 
 By the product $Q_xQ_y$ we follow the `normal' matrix multiplication with the added stipulation that for the $i$\textsuperscript{th} diagonal entry of $Q_xQ_y$ the operator $S^i$ will appear first. For example in entry $\{1,1\}$ of $Q_xQ_y$ there is the term $S_x^1iS_x^2S_y^1iS_y^2$, in the entry $\{2,2\}$ this term will become $iS_x^2S_x^1iS_y^2S_y^1$, this ensures that we have each of the cross terms in the right-hand side of \eqref{matrixid}. For off-diagonal entries we are not concerned as we are always taking a `trace'.

 In the case $x\neq y$ less care is needed as components of $\mathbf{S_x}$ and $\mathbf{S_y}$ commute (in fact $\mathcal{TR} Q_xQ_y=\mathcal{TR} Q_yQ_x$, hence we must only take care that the product order of components of spin at the same site is maintained).
\end{Qspin-matrix}
We also have that $\mathcal{TR} Q_x^2=C^S_{x}-\frac{1}{3}S^2(S+1)^2$ acting on $\mathcal{H}_x$. In $S= 1$
\begin{equation}
 C^1_x=\left(\begin{matrix}
  2 & 0 & 2 \\
  0 & 0 & 0 \\
  2 & 0 & 2
 \end{matrix}\right)_x.
\end{equation}
Using this we can represent our interaction as
\begin{equation}\label{QNRelation}
 (S_x^1S_y^1-S_x^2S_y^2+S_x^3S_y^3)^2=\frac{1}{2}\left(C^S_x+C^S_y-\mathcal{TR}\left[(Q_x-Q_y)^2\right]\right).
\end{equation}
We introduce the field $v$ with value $v_x\in\mathbb{R}$ at the site $x\in\Lambda$. We denote by $\mathbf{v}$ the field of $3\times3$ matrices such that each $\mathbf{v}_x$ has one non-zero entry, the entry $\{3,3\}$ being $v_x\in\mathbb{R}$. We define
\begin{align}
 &H(v)=\sum_{\{x,y\}\in\mathcal{E}}\left(\mathcal{TR}\left[(Q_x-Q_y)^2\right]-C^S_x-C^S_y\right)-\sum_{x\in\Lambda}(\Delta v)_x\left((S_x^3)^2-\frac{1}{3}S(S+1)\right),
 \\
 &Z(v)=Tr e^{-\beta H(v)}.
\end{align}
Note that from \eqref{QNRelation} $H(v)=H^U_{\Lambda,\Delta v}$. Here we have used the lattice Laplacian and below we use the inner product $(f,g)=\sum_{x\in\Lambda}f_xg_x$ with the identity $(f,-\Delta g)=\sum_{\{x,y\}\in\mathcal{E}}(f_x-f_y)(g_x-g_y)$.
Then we can calculate as follows:
\begin{equation}
 \begin{aligned}
  H(v)=\sum_{\{x,y\}\in\mathcal{E}}\Bigg\{\mathcal{TR}&\left[(Q_x+\frac{\mathbf{v}_x}{2}-Q_y-\frac{\mathbf{v}_y}{2})^2\right]-\mathcal{TR}\left[(Q_x-Q_y)(\mathbf{v}_x-\mathbf{v}_y)\right]
  \\
  &-C^S_x-C^S_y+(v_x-v_y)\left((S_x^3)^2-(S_y^3)^2\right)-\frac{1}{4}(v_x-v_y)^2\Bigg\}
  \\
  =\sum_{\{x,y\}\in\mathcal{E}}\Bigg\{\mathcal{TR}&\left[(Q_x+\frac{\mathbf{v}_x}{2}-Q_y-\frac{\mathbf{v}_y}{2})^2\right]-C^S_x-C^S_y\bigg\}-\frac{1}{4}(v,-\Delta v).
 \end{aligned}
\end{equation}
We must check carefully when dealing with the cross terms $(Q_x-Q_y)(\mathbf{v}_x-\mathbf{v}_y)$ and $(\mathbf{v}_x-\mathbf{v}_y)(Q_x-Q_y)$, they are not equal but $\mathcal{TR}(Q_x-Q_y)(\mathbf{v}_x-\mathbf{v}_y)= \mathcal{TR}(\mathbf{v}_x-\mathbf{v}_y)(Q_x-Q_y)$, so the calculation is correct. From this it makes sense to define the following Hamiltonian and partition function:
\begin{align}
 &H'(v)=H(v) +\frac{1}{4}(v,-\Delta v),
 \\
 &Z'(v)=Tr e^{-\beta H'(v)}.
\end{align}
Now the property of Guassian Domination is
\begin{equation}\label{GD}
 Z(v)\leq Z(0)e^{\frac{\beta}{4}(v,-\Delta v)} \iff Z'(v)\leq Z'(0),
\end{equation}
as in the classical case it follows from reflection positivity.
\newline
\begin{QRP}
 Let $\mathcal{H}=\mathit{h}\otimes\mathit{h}$, $dim\,\mathit{h}<\infty$, fix a basis. Let $A,B,C_i,D_i$ for $i=1,...,k$ be matrices in $\mathit{h}$, then
 \begin{equation}
 \begin{aligned}
  \Bigg|Tr_{\mathcal{H}}\exp\bigg\{A\otimes \mathbbm{1}+&\mathbbm{1}\otimes B-\sum_{i=1}^k(C_i\otimes \mathbbm{1}-\mathbbm{1}\otimes D_i)^2\bigg\}\Bigg|^2
  \\
  \leq Tr&_{\mathcal{H}}\exp\Bigg\{A\otimes \mathbbm{1}+\mathbbm{1}\otimes\bar{A}-\sum_{i=1}^k(C_i\otimes \mathbbm{1}-\mathbbm{1}\otimes\bar{C_i})^2\Bigg\}
  \\
  \times
  &Tr_{\mathcal{H}}\exp\Bigg\{\bar{B}\otimes \mathbbm{1}+\mathbbm{1}\otimes B-\sum_{i=1}^k(\bar{D_i}\otimes \mathbbm{1}-\mathbbm{1}\otimes D_i)^2\Bigg\}  
  \end{aligned}
 \end{equation}
where $\bar{A}$ is the complex conjugate of $A$.
\end{QRP}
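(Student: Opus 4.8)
The plan is to follow the standard reflection-positivity argument of Dyson, Lieb and Simon: realize the trace on the left as a Gaussian average of a quantity that factorizes across the two copies of $h$, and then apply the Cauchy--Schwarz inequality to that average. Throughout, the reflection exchanges the two tensor factors and conjugates entrywise in the fixed basis, sending data of the form $X\otimes\mathbbm 1$ to $\mathbbm 1\otimes\overline X$.

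First I would reduce the quadratic interaction to a linear coupling. Because $C_i\otimes\mathbbm 1$ and $\mathbbm 1\otimes D_i$ act on different tensor factors they commute, so each square can be linearized by the Gaussian identity
\[
e^{-X^2}=\frac{1}{\sqrt\pi}\int_{-\infty}^{\infty} e^{-t^2}e^{2itX}\,\mathrm dt ,
\]
valid for any matrix $X$, applied with $X=C_i\otimes\mathbbm 1-\mathbbm 1\otimes D_i$. Since the distinct terms indexed by $i$ need not commute, I would first insert a Trotter product formula, writing the exponential as the $N\to\infty$ limit of an $N$-fold product of one-step factors, each carrying the linear part $e^{(A\otimes\mathbbm 1+\mathbbm 1\otimes B)/N}$ together with one Gaussian-linearized factor per $i$. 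Using $e^{2itX}=(e^{2itC_i}\otimes\mathbbm 1)(\mathbbm 1\otimes e^{-2itD_i})$, every resulting factor is a product of an operator of the form $F\otimes\mathbbm 1$ (a function of $A$ and the $C_i$) and one of the form $\mathbbm 1\otimes G$ (a function of $B$ and the $D_i$).

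Next I would factorize the trace. Collecting all left factors into a single operator $\mathcal L(t)$ on $h$ and all right factors into $\mathcal R(t)$ on $h$, where $t=\{t_{i,n}\}$ is the family of Gaussian variables, and using $\mathrm{Tr}_{h\otimes h}(F\otimes G)=\mathrm{Tr}_h(F)\,\mathrm{Tr}_h(G)$, the left-hand trace becomes
\[
\mathrm{Tr}_{\mathcal H}\exp\{\cdots\}=\int \mathrm d\mu(t)\,\mathrm{Tr}_h\big(\mathcal L(t)\big)\,\mathrm{Tr}_h\big(\mathcal R(t)\big),
\]
with $\mathrm d\mu$ a positive product Gaussian measure. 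Cauchy--Schwarz in $L^2(\mathrm d\mu)$ then gives
\[
\Big|\int\mathrm d\mu\,\mathrm{Tr}_h(\mathcal L)\,\mathrm{Tr}_h(\mathcal R)\Big|^2\le\Big(\int\mathrm d\mu\,|\mathrm{Tr}_h(\mathcal L)|^2\Big)\Big(\int\mathrm d\mu\,|\mathrm{Tr}_h(\mathcal R)|^2\Big).
\]
Finally I would identify each factor on the right with the claimed trace. Writing $|\mathrm{Tr}_h(\mathcal L)|^2=\mathrm{Tr}_h(\mathcal L)\,\mathrm{Tr}_h(\overline{\mathcal L})=\mathrm{Tr}_{\mathcal H}(\mathcal L\otimes\overline{\mathcal L})$, and noting that since the $t_{i,n}$ are real one has $\overline{e^{2itC_i}}=e^{-2it\overline{C_i}}$ and $\overline{e^{A/N}}=e^{\overline A/N}$, the conjugated left operator placed in the right slot reconstructs exactly the reflected data $\overline A$ and $\overline{C_i}$. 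Reversing the Gaussian linearization and the Trotter limit yields
\[
\int\mathrm d\mu\,|\mathrm{Tr}_h(\mathcal L)|^2=\mathrm{Tr}_{\mathcal H}\exp\Big\{A\otimes\mathbbm 1+\mathbbm 1\otimes\overline A-\sum_{i=1}^k(C_i\otimes\mathbbm 1-\mathbbm 1\otimes\overline{C_i})^2\Big\},
\]
and symmetrically, with $\overline{\mathcal R}$ in the left slot and $\mathcal R$ in the right slot, the $\mathcal R$-integral produces the second trace.

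The main obstacle will be the bookkeeping in the non-commuting case: because the terms for different $i$ do not commute, one cannot linearize inside a single exponential and must control the Trotter approximation jointly with the growing family of Gaussian integration variables, verifying that the left/right factorization persists at every order and, crucially, that the complex phases from the linearization conjugate correctly so the squared integrals reproduce precisely the $\overline A,\overline{C_i}$ and $\overline B,\overline{D_i}$ structure demanded by the statement. The positivity of the Gaussian measure and the finite dimension of $h$ render the limiting and interchange-of-limit arguments routine once this structure is in place.
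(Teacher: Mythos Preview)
Your proposal is correct and follows exactly the approach the paper indicates: the paper does not spell out a proof of Lemma~3.3 but simply records that ``The proof uses Trotter's formula'' and defers to \cite{D-L-S,F-I-L-S,F-I-L-S2}. What you have written is precisely the standard Dyson--Lieb--Simon argument encoded in that remark---Trotterize, Gaussian-linearize each quadratic factor so that everything splits as a tensor product, take the trace to get a product of two scalar functions of the Gaussian variables, apply Cauchy--Schwarz in $L^2(\mathrm d\mu)$, and reassemble using $\overline{\mathrm{Tr}_h(M)}=\mathrm{Tr}_h(\overline M)$ together with $\overline{e^{2itC}}=e^{-2it\overline C}$ for real $t$---so there is nothing to compare.
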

The proof uses Trotter's formula. As in the classical case, reflection positivity is a very powerful tool, for more information see \cite{B-C,D-L-S,F-I-L-S,F-I-L-S2,F-S-S,T-P,U-T,D2}. \par
Before we prove reflection positivity for our partition function we should calculate the trace in $Z'(v)$, recall how we have defined our multiplication.
\begin{equation}\label{Qtrace}
\begin{aligned}
 \mathcal{TR}\left[(Q_x+\frac{\mathbf{v}_x}{2}-Q_y-\frac{\mathbf{v}_y}{2})^2\right]=&\left((S_x^1)^2-(S_y^1)^2\right)^2+\left((S_x^2)^2-(S_y^2)^2\right)^2
 \\
 &+\left((S_x^3)^2+\frac{\mathbf{v}_x}{2}-(S_y^3)^2-\frac{\mathbf{v}_y}{2}\right)^2+\left(S_x^1iS_x^2-S_y^1iS_y^2\right)^2
 \\
&+\left(S_x^1S_x^3-S_y^1S_y^3\right)^2+\left(iS_x^2S_x^3-iS_y^2S_y^3\right)^2
\\ &+\left(iS_x^2S_x^1-iS_y^2S_y^1\right)^2+\left(S_x^3S_x^1-S_y^3S_y^1\right)^2 +\left(S_x^3iS_x^2-S_y^3iS_y^2\right)^2.
 \end{aligned}
\end{equation}
Now we have enough information to use the Lemma, let $R:\Lambda\to\Lambda$ be a reflection that swaps $\Lambda_1$ and $\Lambda_2$ where $\Lambda=\Lambda_1\cup\Lambda_2$, each such reflection defines two sub-lattices of $\Lambda$ in this way, we split the field $v=(v_1,v_2)$ on the sub-lattices $\Lambda_1$ and $\Lambda_2$.
\newline
\begin{QRP2}
 For $S\in\frac{1}{2}\mathbb{N}$ and any reflection, $R$, across edges and $v=(v_1,v_2)$
 \[
  Z((v_1,v_2))^2\leq Z((v_1,Rv_1))Z((Rv_2,v_2)).
 \]
\end{QRP2}
\begin{proof}
We cast $Z'(v)$ in RP form. Let
\begin{equation}
 \begin{aligned}
  A=&-\beta\sum_{\{x,y\}\in\mathcal{E}_1}\mathcal{TR}\left[(Q_x+\frac{\mathbf{v}_x}{2}-Q_y-\frac{\mathbf{v}_y}{2})^2\right]-\beta d\sum_{x\in\Lambda_1}C^S_x,
  \\
  B=&\text{same in }\Lambda_2,
 \end{aligned}
\end{equation}
where $\mathcal{E}_1$ is the set of edges in $\Lambda_1$ and we note that the term $C^S_x$ occurs $d$ times in the sum over $\mathcal{E}$ for each $x\in\Lambda$. Further define
\begin{equation}
 \begin{array}{l l}
  C_i^1=\sqrt{\beta}(S_{x_i}^1)^2,\quad \quad                         & D_i^1=\sqrt{\beta}(S_{y_i}^1)^2. \\
  C_i^2=\sqrt{\beta}(S_{x_i}^2)^2,  \quad \quad                       & D_i^2=\sqrt{\beta}(S_{y_i}^2)^2. \\
  C_i^3=\sqrt{\beta}((S_{x_i}^3)^2+\frac{\mathbf{v}_{x_i}}{2}),  \quad \quad    & D_i^3=\sqrt{\beta}((S_{y_i}^3)^2+\frac{\mathbf{v}_{y_i}}{2}). \\
  C_i^4=\sqrt{\beta}S_{x_i}^1iS_{x_i}^2 ,  \quad \quad                & D_i^4=\sqrt{\beta}S_{y_i}^1iS_{y_i}^2.\\
  C_i^5=\sqrt{\beta}S_{x_i}^1S_{x_i}^3, \quad   \quad                 & D_i^5=\sqrt{\beta}S_{y_i}^1S_{y_i}^3. \\
  C_i^6=\sqrt{\beta}iS_{x_i}^2S_{x_i}^3,  \quad \quad                 & D_i^6=\sqrt{\beta}iS_{y_i}^2S_{y_i}^3. \\
  C_i^7=\sqrt{\beta}iS_{x_i}^2S_{x_i}^1 ,  \quad \quad                & D_i^7=\sqrt{\beta}iS_{y_i}^2S_{y_i}^1.\\
  C_i^8=\sqrt{\beta}S_{x_i}^3S_{x_i}^1, \quad   \quad                 & D_i^8=\sqrt{\beta}S_{y_i}^3S_{y_i}^1. \\
  C_i^9=\sqrt{\beta}S_{x_i}^3iS_{x_i}^2,  \quad \quad                 & D_i^9=\sqrt{\beta}S_{y_i}^3iS_{y_i}^2.
 \end{array}
\end{equation}
Where $\{x_i,y_i\}$ are edges crossing the reflection plane with $x_i\in\Lambda_1$ and $y_i\in\Lambda_2$. Because $\overline{S_x^1}=S_x^1$, $\overline{S_x^3}=S_x^3$, $\overline{iS_x^2}=iS_x^2$ we see from the previous lemma that $Z'((v_1,v_2))^2\leq Z'((v_1,Rv_1))Z'((Rv_2,v_2))$, from which the result follows.
\end{proof}
The Gaussian domination inequality \eqref{GD} follows from this just as in the classical case, a proof can be found in \cite{D-L-S}. The next step in the classical case was to obtain an infrared bound for the correlation function $\rho(x)$, we cannot do this directly but we can obtain an infrared bound for the \emph{Duhamel correlation function}.
\newline
\begin{duh}
 For matrices $A,B$ we define the \emph{Duhamel correlation function} $(A,B)_{Duh}$ as 
 \[
  (A,B)_{Duh}=\frac{1}{Z(0)}\frac{1}{\beta}\int_0^\beta\mathrm{d}sTrA^*e^{-sH(0)}Be^{-(\beta-s)H(0)}
 \]
Note that this is an inner product.
\end{duh}
Now to use this correlation function we must first fix our definition of the Fourier transform
\begin{equation}
\begin{aligned}
 \mathcal{F}(f)(k)=\hat{f}(k)=&\sum_{x\in\Lambda}e^{-ikx}f(x) \quad\quad\,\,\,\,\,\,\, k\in\Lambda^*,
 \\
 f(x)=&\frac{1}{|\Lambda|}\sum_{k\in\Lambda^*}e^{ikx}\hat{f}(k) \quad\quad x\in\Lambda.
 \end{aligned}
\end{equation}
where
\begin{equation}
 \Lambda^*=\frac{2\pi}{L_1}\left\{-\frac{L_1}{2}+1,...,\frac{L_1}{2}\right\}\times...\times\frac{2\pi}{L_d}\left\{-\frac{L_d}{2}+1,..,\frac{L_d}{2}\right\},
\end{equation}
\begin{QIRB}
 For $S\in\frac{1}{2}\mathbb{N}$ and $L_i$ even for $i=1,...,d$ we have the following infrared bound
 \begin{equation}
  \mathcal{F}\left((S_0^3)^2-\frac{1}{3}S(S+1),(S_x^3)^2-\frac{1}{3}S(S+1)\right)_{Duh}(k)\leq \frac{1}{2\beta\varepsilon(k)}.
 \end{equation}
\end{QIRB}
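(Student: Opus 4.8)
The plan is to extract the infrared bound from Gaussian domination \eqref{GD} exactly in the manner of \cite{D-L-S,F-S-S}, the only genuinely quantum subtlety being the appearance of the Duhamel two-point function in a second-derivative computation. Write $A_x=(S_x^3)^2-\frac13 S(S+1)\mathbbm 1$, so that the field enters the Hamiltonian only through $H(v)=H(0)-\sum_{x}(\Delta v)_x A_x$. By Lemma 3.4 and the resulting estimate \eqref{GD}, we have $Z(v)\le Z(0)e^{\frac\beta4(v,-\Delta v)}$ for every real field $v$, with equality at $v=0$. Equivalently, the (analytic) function $F(v)=Z(0)e^{\frac\beta4(v,-\Delta v)}-Z(v)$ is nonnegative and vanishes at $v=0$, so $v=0$ is a global minimum and the Hessian of $F$ there is positive semidefinite. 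I would exploit this by probing along a fixed direction $w$: set $v=tw$ and impose $\frac{d^2}{dt^2}F(tw)\big|_{t=0}\ge 0$.

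The first step is to compute the two second derivatives. The Gaussian factor contributes $\frac{d^2}{dt^2}Z(0)e^{\frac\beta4 t^2(w,-\Delta w)}\big|_{t=0}=\frac\beta2 Z(0)(w,-\Delta w)$. For the partition function I would write $H(tw)=H(0)-tB$ with $B=\sum_x(\Delta w)_x A_x$ (Hermitian, with real coefficients) and differentiate $Z(tw)=\mathrm{Tr}\,e^{-\beta(H(0)-tB)}$ twice. This is the one point where quantumness matters: since $B$ does not commute with $H(0)$, the second derivative is \emph{not} $\beta^2\mathrm{Tr}[B^2 e^{-\beta H(0)}]$ but the Duhamel-regularised object obtained from the integral representation $\frac{d}{dt}e^{M(t)}=\int_0^1 e^{uM(t)}M'(t)e^{(1-u)M(t)}\,du$. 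Applying this and using cyclicity of the trace yields $\frac{d^2}{dt^2}Z(tw)\big|_{t=0}=\beta\int_0^\beta \mathrm{Tr}\big[Be^{-sH(0)}Be^{-(\beta-s)H(0)}\big]\,ds=\beta^2 Z(0)\,(B,B)_{Duh}$, which is precisely where Definition 3.5 enters. The Hessian inequality then reads $\frac\beta2(w,-\Delta w)\ge\beta^2(B,B)_{Duh}$, i.e.
\[(B,B)_{Duh}\le\frac{1}{2\beta}(w,-\Delta w).\]

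The final step is Fourier extraction. By sesquilinearity of the Duhamel inner product and reality of the coefficients, $(B,B)_{Duh}=\sum_{x,y}(\Delta w)_x(\Delta w)_y\,(A_x,A_y)_{Duh}$, and translation invariance of $H(0)$ makes $(A_x,A_y)_{Duh}=G(x-y)$ depend only on $x-y$; write $\hat G(k)=\mathcal F(A_0,A_\cdot)_{Duh}(k)$, which is exactly the quantity the lemma bounds. Using $\widehat{\Delta w}(k)=-\varepsilon(k)\hat w(k)$ and Parseval, the two sides become $(B,B)_{Duh}=\frac1{|\Lambda|}\sum_k\hat G(k)\,\varepsilon(k)^2\,|\hat w(k)|^2$ and $(w,-\Delta w)=\frac1{|\Lambda|}\sum_k\varepsilon(k)\,|\hat w(k)|^2$. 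Choosing the real field $w_x=\cos(k\cdot x)$ concentrates $|\hat w|^2$ on the single pair $\pm k$; invoking the inversion symmetry $\hat G(k)=\hat G(-k)$ and cancelling one common factor $\varepsilon(k)$ (legitimate for $k\neq0$, the case $k=0$ being trivial as $\varepsilon(0)=0$) leaves exactly $\hat G(k)\le\frac{1}{2\beta\varepsilon(k)}$, as claimed.

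I expect the main obstacle to be the second-derivative computation: one must carry the non-commutativity of $B$ and $H(0)$ correctly through Duhamel's formula so that the Duhamel two-point function, rather than a naive thermal expectation, emerges, and one must note that the (irrelevant) first-order term in $t$ does not interfere, since it only affects $F'(0)$, which vanishes automatically at the minimum. The remaining bookkeeping -- Hermiticity of $B$, evenness of $G$ under inversion, and the single-mode choice of $w$ -- is routine but must be tracked carefully to land the factor $\frac12$.
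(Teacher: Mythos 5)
Your proposal is correct and follows essentially the same route as the paper: both extract the bound by comparing the second-order terms of the Gaussian domination inequality around $v=0$, using Duhamel's formula to identify the second derivative of $Z$ with the Duhamel two-point function and testing with the plane-wave field $\cos(kx)$. Your phrasing via the positive-semidefinite Hessian of $Z(0)e^{\frac{\beta}{4}(v,-\Delta v)}-Z(v)$ is just a cleaner packaging of the paper's Taylor expansion in $\eta$ with $v_x=\eta\cos(kx)$.
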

\begin{proof}
 We begin as usual by choosing $v_x=\eta\cos(kx)$, then from Taylor's theorem and using $h=\Delta\mathbf{v}=-\varepsilon(k)\mathbf{v}$ we see
 \begin{equation}
  Z(\mathbf{v})=Z(0)+\frac{1}{2}\left(h,\left.\frac{\partial^2Z(\mathbf{v})}{\partial h_x\partial h_y}\right|_{h=0}h\right)+O(\eta^4).
 \end{equation}
Using the Duhamel formula 
\begin{equation}
 e^{\beta(A+B)}=e^{\beta A}+\int_0^\beta\mathrm{d}s e^{sA}Be^{(\beta-s)(A+B)}
\end{equation}
 with $A=H(0)$ and $B=-\sum_{x\in\Lambda}(\Delta v)_x\left((S_x^3)^2-\frac{1}{3}S(S+1)\right)$ gives
\begin{equation}
 \frac{1}{Z(0)}\left.\frac{\partial^2Z(\mathbf{v})}{\partial h_x\partial h_y}\right|=\beta^2\left((S_x^3)^2-\frac{1}{3}S(S+1),(S_y^3)^2-\frac{1}{3}S(S+1)\right)_{Duh}.
\end{equation}
Putting this together we have
\begin{equation}
 \begin{aligned}
  Z(&\mathbf(v))-O(\eta^4)=
  \\
  &Z(0)+\frac{1}{2}Z(0)(\eta\varepsilon(k)\beta)^2\sum_{x,y\in\Lambda}\cos(kx)\cos(ky)\left((S_x^3)^2-\frac{1}{3}S(S+1),(S_y^3)^2-\frac{1}{3}S(S+1)\right)_{Duh}
  \\
  =&Z(0)+\frac{1}{2}Z(0)\beta^2\eta^2\varepsilon(k)^2\mathcal{F}\left((S_0^3)^2-\frac{1}{3}S(S+1),(S_y^3)^2-\frac{1}{3}S(S+1)\right)_{Duh}\sum_{x\in\Lambda}\cos^2(kx).
 \end{aligned}
\end{equation}
Also
\begin{equation}
 e^{-\frac{1}{4}\beta(\mathbf{v},\Delta\mathbf{v})}=e^{\frac{1}{4}\beta\varepsilon(k)\eta^2\sum\cos^2(kx)},
\end{equation}
comparing the order $\eta^2$ terms gives the result.
\end{proof}
To transfer the infrared bound to the normal correlation function we would like to use the Falk-Bruch inequality \cite{F-B}:
\begin{equation}
 \frac{1}{2}\langle A^*A+AA^*\rangle \leq(A,A)_{Duh}+\frac{1}{2}\sqrt{(A,A)_{Duh}\langle[A^*,[H^U_{\Lambda,\mathbf{h}},A]]\rangle}
\end{equation}
where is the Hamiltonian of the system. If we attempt to use this inequality with $A=\mathcal{F}\left((S_x^3)^2-\frac{1}{3}S(S+1)\right)(k)$ and $H=\beta H^U_{\Lambda,\mathbf{0}}$, we must calculate the double commutator to find $\langle[A^*,[H,A]]\rangle$. In general spins this is a huge calculation, instead we specialise to the case $S=1$. In this case we can calculate as below, it uses several special properties of the Spin-1 matrices. To make use of this inequality we note that
\begin{equation}
\begin{aligned}
 \mathcal{F}\Bigg\langle\Bigg((S_0^3)^2-\frac{1}{3}&S(S+1)\Bigg)\left((S_x^3)^2-\frac{1}{3}S(S+1)\right)\Bigg\rangle^U_{\Lambda,\mathbf{h}}(k)
 \\
 =&\sum_{x\in\Lambda}e^{-ikx}\left\langle\left((S_0^3)^2-\frac{1}{3}S(S+1)\right)\left((S_x^3)^2-\frac{1}{3}S(S+1)\right)\right\rangle^U_{\Lambda,\mathbf{h}}
 \\
 =&\frac{1}{|\Lambda|}\sum_{x,y\in\Lambda}e^{-ik(x-y)}\left\langle\left((S_x^3)^2-\frac{1}{3}S(S+1)\right)\left((S_y^3)^2-\frac{1}{3}S(S+1)\right)\right\rangle^U_{\Lambda,\mathbf{h}}
 \\
 =&\frac{1}{|\Lambda|}\left\langle\mathcal{F}\left((S_{x}^3)^2-\frac{1}{3}S(S+1)\right)(-k)\mathcal{F}\left((S_y^3)^2-\frac{1}{3}S(S+1)\right)(k)\right\rangle^U_{\Lambda,\mathbf{h}}.
 \end{aligned}
\end{equation}
This relation holds for other correlation functions, including the Duhamel correlation function, but for Duhamel
\begin{equation}
\begin{aligned}
 \mathcal{F}\Bigg((S_0^3)^2-\frac{1}{3}&S(S+1),(S_x^3)^2-\frac{1}{3}S(S+1)\Bigg)_{Duh}(k)
 \\
 =&\frac{1}{|\Lambda|}\left(\mathcal{F}\left((S_{x}^3)^2-\frac{1}{3}S(S+1)\right)(k),\mathcal{F}\left((S_y^3)^2-\frac{1}{3}S(S+1)\right)(k)\right)_{Duh},
 \end{aligned}
\end{equation}
 there is no $-k$ because of the definition of the Duhamel correlation function and the equality $\left(\mathcal{F}\left[(S_x^3)^2\right]\right)(k)^*=\mathcal{F}\left[(S_{x}^3)^2\right](-k)$.

First we prove a preliminary lemma regarding the double commutator
\\
\begin{doublecomm}
For $S=1$, $A=\mathcal{F}\left((S_x^3)^2-\frac{2}{3}\right)(k)$ and $H=\beta H_{\Lambda,0}$ we have 
\[
\langle[A^*,[H,A]]\rangle^U_{\Lambda,\mathbf{h}}=8\beta|\Lambda|\varepsilon(k+\pi)\left\langle S_0^1S_0^3S_{e_1}^1S_{e_1}^3\right\rangle^U_{\Lambda,\mathbf{h}}
\]
where $e_1$ is the first basis vector in $\mathbb{Z}^d$.
\end{doublecomm}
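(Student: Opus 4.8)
The plan is to expand $A$ in position space, localise the double commutator to single edges, and then reduce everything to one spin correlation using the special algebra of the $S=1$ matrices. Write $\phi_x=(S_x^3)^2-\tfrac23$, so that $A=\sum_{x\in\Lambda}e^{-ikx}\phi_x$ and $A^*=\sum_{x\in\Lambda}e^{ikx}\phi_x$ (each $\phi_x$ is Hermitian). Since $H=\beta H^U_{\Lambda,\mathbf{0}}=-2\beta\sum_{\{x,y\}\in\mathcal{E}}W_{xy}$ with $W_{xy}=(S^1_xS^1_y-S^2_xS^2_y+S^3_xS^3_y)^2$, and each $\phi_x$ acts on a single site, the term $[\phi_z,[W_{xy},\phi_w]]$ vanishes unless both $w,z\in\{x,y\}$. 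Hence $\langle[A^*,[H,A]]\rangle^U$ is a sum over edges of four contributions carrying phases $1,\,1,\,e^{ik(x-y)},\,e^{ik(y-x)}$.

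First I would reduce the bookkeeping by symmetry. The state $\langle\cdot\rangle^U_{\Lambda,\mathbf{h}}$ and each $W_{xy}$ are invariant under the swap of the two endpoints of an edge, so on a given edge the two ``diagonal'' expectations $\langle[\phi_x,[W_{xy},\phi_x]]\rangle^U$ agree (call the common value $P$), as do the two ``cross'' expectations $\langle[\phi_x,[W_{xy},\phi_y]]\rangle^U$ (call it $Q$). Collecting the phases (the cross pair combines to $e^{ik(x-y)}+e^{ik(y-x)}=2\cos k_i$ on an edge in direction $e_i$) and using translation invariance together with $\varepsilon(k+\pi)=2\bigl(d+\sum_{i=1}^d\cos k_i\bigr)$, one finds
\[
\langle[A^*,[H,A]]\rangle^U=-4\beta|\Lambda|\Big(dP+Q\sum_{i=1}^d\cos k_i\Big).
\]
Thus the statement is equivalent to the two local identities $P=Q=-4\langle S_0^1S_0^3S_{e_1}^1S_{e_1}^3\rangle^U$. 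Note in particular that the factor $\varepsilon(k+\pi)$, rather than the $\varepsilon(k)$ one would naively expect, appears precisely because the self- and cross-terms carry the same sign, i.e.\ because $P=Q$.

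The remaining, computational step is to evaluate $P$ and $Q$. Using $[S^\alpha,S^\beta]=i\sum_\gamma\mathcal{E}_{\alpha\beta\gamma}S^\gamma$ one gets $[T_{xy},(S_x^3)^2]=-i\{S_x^2,S_x^3\}S_y^1-i\{S_x^1,S_x^3\}S_y^2$, where $T_{xy}=S^1_xS^1_y-S^2_xS^2_y+S^3_xS^3_y$, and then $[W_{xy},\phi_x]=T_{xy}[T_{xy},(S_x^3)^2]+[T_{xy},(S_x^3)^2]T_{xy}$. Commuting once more with $\phi_x$ and with $\phi_y$ and taking expectations, I would invoke the $S=1$ identities (such as $(S^\alpha)^3=S^\alpha$ and the resulting relations among $\{S^\alpha,S^\beta\}$, $(S^\alpha)^2$, and the products $S^\alpha S^\beta$) to collapse all the per-site quadratic operators into products of the form $S^1S^3$. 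Finally the twisted $SU(2)$ symmetry of $H^U$ (the image under $U$ of the genuine $SU(2)$ invariance of $(\mathbf{S}_x\cdot\mathbf{S}_y)^2$), together with lattice symmetry, identifies every surviving four-point correlation with $\langle S_0^1S_0^3S_{e_1}^1S_{e_1}^3\rangle^U$, yielding $P=Q=-4\langle S_0^1S_0^3S_{e_1}^1S_{e_1}^3\rangle^U$ and hence the lemma.

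The main obstacle is this last step: carrying out the double commutator explicitly and, above all, proving that the self-term $P$ and the cross-term $Q$ coincide. This equality is special to $S=1$ and to the $-S^2_xS^2_y$ sign in $T_{xy}$; it is what converts the naive $\varepsilon(k)$ into $\varepsilon(k+\pi)$, and it also relies on the symmetry reductions that express the many a priori distinct correlations (those involving $S^2$, and both orderings $S^\alpha S^3$ versus $S^3S^\alpha$) in terms of the single correlation $\langle S_0^1S_0^3S_{e_1}^1S_{e_1}^3\rangle^U$.
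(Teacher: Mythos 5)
Your reduction is correct and follows the same route as the paper: expanding $A$ in position space, localising the double commutator to edges, observing that the self-terms and cross-terms contribute with the same weight so that $d+\sum_i\cos k_i=\tfrac12\varepsilon(k+\pi)$ appears, and reducing everything to the single correlation $\langle S_0^1S_0^3S_{e_1}^1S_{e_1}^3\rangle^U$. Your bookkeeping identity $\langle[A^*,[H,A]]\rangle^U=-4\beta|\Lambda|\bigl(dP+Q\sum_{i=1}^d\cos k_i\bigr)$ is right, and the two local identities you isolate, $P=Q=-4\langle S_0^1S_0^3S_{e_1}^1S_{e_1}^3\rangle^U$, are exactly what the paper establishes (indeed the paper proves the stronger operator identity $[(S_x^3)^2,f(\mathbf{S}_x,\mathbf{S}_y)]=[(S_y^3)^2,f(\mathbf{S}_x,\mathbf{S}_y)]$ for the relevant $f$, which gives $P=Q$ before taking expectations).

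The gap is that you stop precisely where the lemma's content begins: you assert $P=Q$ and their common value but do not prove either, and you yourself flag this as ``the main obstacle.'' Since the whole lemma is this computation, the proposal is a correct plan rather than a proof. To close it along the paper's lines: expand $W_{xy}=T_{xy}^2$ into its nine products; the three squares $(S_x^iS_y^i)^2=(S_x^i)^2(S_y^i)^2$ and the constant commute with $(S_x^3)^2$, and so do the cross terms $S_x^1S_x^2S_y^1S_y^2$ and $S_x^2S_x^1S_y^2S_y^1$ (this uses $S^iS^jS^i=0$ for $i\neq j$ at $S=1$), leaving $[T_{xy}^2,(S_x^3)^2]=[S_x^2S_y^2,S_x^3S_y^3]+[S_x^3S_y^3,S_x^1S_y^1]$ up to sign; a second application of the commutation relations, the transpose identities $(S^3S^2)^T=-S^2S^3$ and $(S^3S^1)^T=S^1S^3$, and the $U$-conjugation (which flips the sign of $S^2S^3$-type correlations and identifies all surviving four-point functions with $\langle S_0^1S_0^3S_{e_1}^1S_{e_1}^3\rangle$) then yield $P=Q=-4\langle S_0^1S_0^3S_{e_1}^1S_{e_1}^3\rangle^U$. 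Your alternative starting point $[W_{xy},\phi_x]=T_{xy}[T_{xy},\phi_x]+[T_{xy},\phi_x]T_{xy}$ with $[T_{xy},\phi_x]=-i\{S_x^2,S_x^3\}S_y^1-i\{S_x^1,S_x^3\}S_y^2$ is also viable but still requires the same $S=1$ simplifications to collapse the anticommutators, so it does not save you the calculation.
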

\begin{proof}
The proof is just a calculation, although it is somewhat complicated, we begin by noting that in the case $S=1$ the matrices $(S^i)^2$ and $(S^j)^2$ commute and $(S^i)^3=S^i$ for $i,j=1,2,3$.
\begin{equation}
\begin{aligned}
 \left[H,A\right] =&-2\beta\sum_{x,y:\{x,y\}\in\mathcal{E}}e^{-ikx}\left[(S_x^1S_y^1-S_x^2S_y^2+S_x^3S_y^3)^2,(S_x^3)^2\right]
         \\
         =& -2\beta\sum_{x,y:\{x,y\}\in\mathcal{E}}e^{-ikx}\Big[ (S_x^1S_x^3S_y^1S_y^3-S_x^1S_x^2S_y^1S_y^2-S_x^2S_x^1S_y^2S_y^1
         \\
        &\qquad\qquad\qquad\qquad\qquad -S_x^2S_x^3S_y^2S_y^3+S_x^3S_x^1S_y^3S_y^1-S_x^3S_x^2S_y^3S_y^2),(S_x^3)^2\Big].
\end{aligned}
\end{equation}

The square terms have dropped out as they commute with $(S_x^3)^2$, as does the constant term $S(S+1)/3$. Now we calculate the commutator for each term in the sum, here we make use of the fact that $S^iS^jS^i=0$ for $i\neq j$,  $i,j=1,2,3$ for $S=1$.
\begin{equation}
\begin{aligned}
 \left[H,A\right]=&-2\beta\sum_{x,y:\{x,y\}\in\mathcal{E}}e^{-ikx}\bigg(S_x^1S_x^3S_y^1S_y^3+\overbrace{\left[(S_x^3)^2,S_x^1S_x^2\right]}^{=0}S_y^1S_y^2+\overbrace{\left[(S_x^3)^2,S_x^2S_x^1\right]}^{=0}S_y^2S_y^1
 \\
 &\qquad\qquad\qquad\qquad\qquad\qquad\qquad\,\,-S_x^2S_x^3S_y^2S_y^3-S_x^3S_x^1S_y^3S_y^1+S_x^3S_x^2S_y^3S_y^2\bigg)
 \\
 =&+2\beta\sum_{x,y:\{x,y\}\in\mathcal{E}}e^{-ikx}\bigg(\left[S_x^2S_y^2,S_x^3S_y^3\right]+\left[S_x^3S_y^3,S_x^1S_y^1\right]\bigg).
\end{aligned}
\end{equation}
Now calculating the commutator of these products and using the spin commutation relations we obtain
\begin{equation}
\left[H,A\right]=2\beta i\sum_{x,y:\{x,y\}\in\mathcal{E}}e^{-ikx}\underbrace{\left(S_x^2S_x^3S_y^1+S_x^3S_x^1S_y^2+S_x^1S_y^3S_y^2+S_x^2S_y^1S_y^3\right)}_{f(\mathbf{S}_x,\mathbf{S}_y)}.
\end{equation}
Now we can use this to calculate the double commutator, firstly we split the commutator into the sum of two similar terms
\begin{equation}
\begin{aligned}
\left[A^*,\left[H,A\right]\right]=&2\beta i \sum_{x,y:\{x,y\}\in\mathcal{E}}e^{-ikx}\left[e^{ikx}(S_x^3)^2+e^{iky}(S_y^3)^2,f(\mathbf{S}_x,\mathbf{S}_y)\right]
\\
&=2\beta i \sum_{x,y:\{x,y\}\in\mathcal{E}}\left[(S_x^3)^2, f(\mathbf{S}_x,\mathbf{S}_y)  \right]+\cos(k(x-y))\left[(S_y^3)^2, f(\mathbf{S}_x,\mathbf{S}_y) \right].
\end{aligned}
\end{equation}
We can calculate each of these commutators separately, the first double commutator can be calculated as follows

\begin{equation}
\begin{aligned}
\left[(S_x^3)^2, f(\mathbf{S}_x,\mathbf{S}_y)  \right]=&\left[(S_x^3)^2,S_x^2S_x^3S_y^1+S_x^3S_x^1S_y^2+S_x^1S_y^3S_y^2+S_x^2S_y^1S_y^3\right]
\\
=&-S_x^2S_x^3S_y^1+iS_x^3S_x^2S_y^3S_y^2+iS_x^2S_x^3S_y^3S_y^2
\\
&\qquad\qquad\qquad +S_x^3S_x^1S_y^2-iS_x^3S_x^1S_y^3S_y^1-iS_x^1S_x^3S_y^1S_y^3.
\end{aligned}
\end{equation}
We recognise the commutator relations above to finally give
\begin{equation}
\left[(S_x^3)^2, f(\mathbf{S}_x,\mathbf{S}_y) \right]=iS_x^2S_x^3S_y^2S_y^3+iS_x^3S_x^2S_y^3S_y^2-iS_x^3S_x^1S_y^3S_y^1-iS_x^1S_x^3S_y^1S_y^3.
\end{equation}
For the other commutator we follow the previous calculation almost exactly and in fact we find the two commutators are equal
\begin{equation}
\left[(S_y^3)^2, f(\mathbf{S}_x,\mathbf{S}_y) \right]=\left[(S_x^3)^2, f(\mathbf{S}_x,\mathbf{S}_y) \right].
\end{equation}
To finish the calculation we take expectations
\begin{equation}
\begin{aligned}
\langle[A^*,[H,A]]\rangle^U_{\Lambda,\mathbf{h}}=&
\\
-4\beta |\Lambda|\sum_{i=1}^d(1+&\cos(k_i))\left\langle S_0^2S_0^3S_{e_i}^2S_{e_i}^3+S_0^3S_0^2S_{e_i}^3S_{e_i}^2-S_0^3S_0^1S_{e_i}^3S_{e_i}^1-S_0^1S_0^3S_{e_i}^1S_{e_i}^3\right\rangle^U_{\Lambda,\mathbf{h}}
\end{aligned}
\end{equation}
now use the identities $(S^3S^2)^T=-S^2S^3$ and $(S^3S^1)^T=S^1S^3$ and get
\begin{equation}
\begin{aligned}
\langle[A^*,[H,A]]\rangle^U_{\Lambda,\mathbf{h}}=&-8\beta |\Lambda|\sum_{i=1}^d(1+\cos(k_i))\left\langle S_0^2S_0^3S_{e_i}^2S_{e_i}^3-S_0^3S_0^1S_{e_i}^3S_{e_i}^1\right\rangle^U_{\Lambda,\mathbf{h}}
\\
=&8\beta |\Lambda|\sum_{i=1}^d(1+\cos(k_i))\left\langle 2S_0^2S_0^3S_{e_i}^2S_{e_i}^3\right\rangle^{0,J_2}_{\beta,\Lambda,\mathbf{h}}.
\\
=&8\beta|\Lambda|\varepsilon(k+\pi)\left\langle S_0^1S_0^3S_{e_1}^1S_{e_1}^3\right\rangle^{0,J_2}_{\beta,\Lambda,\mathbf{h}}.
\end{aligned}
\end{equation}
On the second line we have used that $US_{e_1}^2S_{e_1}^3=-S_{e_1}^2S_{e_1}^3U$ to move from states $\langle\cdot\rangle^U_{\Lambda,\mathbf{h}}$ to states $\langle\cdot\rangle^{0,J_2}_{\beta,\Lambda,\mathbf{0}}$ and on the third line we have used that each cross term $\langle S_x^iS_x^jS_y^iS_y^j\rangle^{0,J_2}_{\beta,\Lambda,\mathbf{h}}$ has the same expectation value. Now simply note that the above correlation is the same in $\langle \cdot\rangle^U_{\Lambda,\mathbf{h}}$ and in $\langle\cdot\rangle^{0,J_2}_{\beta,\Lambda,\mathbf{0}}$.
\end{proof}
 Using this in Falk-Bruch we have the bound
 \begin{equation}\label{IRB2}
 \hat{\rho}(k)\leq \sqrt{\left\langle S_0^1S_0^3S_{e_1}^1S_{e_1}^3\right\rangle^{0,J_2}_{\beta,\Lambda,\mathbf{h}}}\sqrt{\frac{\varepsilon(k+\pi)}{\varepsilon(k)}}+\frac{1}{2\beta\varepsilon(k)}.
 \end{equation}
 The possibility of obtaining a result is not ruled out for other values of $S$, I expect it to be the case for other values of $S$, but computing the double commutator in Falk-Bruch becomes extremely complicated.

Now using the Fourier transform in the following way:
 \begin{equation}
 \left\langle\left((S_0^3)^2-\frac{2}{3}\right)\left((S_y^3)^2-\frac{2}{3}\right)\right\rangle^{0,J_2}_{\Lambda,\mathbf{h}}=\frac{1}{|\Lambda|}\sum_{x\in\Lambda}\rho(x)+\frac{1}{|\Lambda|}\sum_{k\in\Lambda^*\setminus\{0\}}e^{ik\cdot y}\hat{\rho}(x)(k)
 \end{equation}
 
with $y=e_1$ we get the lower bound
\begin{equation}
\frac{1}{|\Lambda|}\sum_{x\in\Lambda}\rho(x)\geq\rho(e_1)-\frac{1}{|\Lambda|}\sum_{k\in\Lambda^*\setminus\{0\}}\sqrt{\frac{\varepsilon(k+\pi)}{\varepsilon(k)}}\left(\frac{1}{d}\sum_{i=1}^d\cos k_i\right)_+-\frac{1}{2\beta\varepsilon(k)}.
\end{equation}
Taking the thermodynamic limit with $L_i=L$ even for $i=1,..,d$ gives
\begin{equation}\label{disclb}
\liminf_{L\to\infty}\left\langle\frac{1}{|\Lambda|}\sum_{x\in\Lambda}\rho(x)\right\rangle\geq\rho(e_1)-\frac{1}{(2\pi)^d}\int_{[-\pi,\pi]^d}\left(\sqrt{\frac{\varepsilon(k+\pi)}{\varepsilon(k)}}\left(\frac{1}{d}\sum_{i=1}^d\cos k_i\right)_++\frac{1}{2\beta\varepsilon(k)}\right)\mathrm{d}k.
\end{equation}
 The integral is finite if and only if $d\geq3$ due to the last term. Now taking the limit $\beta\to\infty$ gives the result.
 \qed
 
\section{Extending to $J_1<0$}
 
The aim of this section is to extend the proof of theorem 3.1 to a proof of theorem 2.1. The proof of long-range order for $J_1<0$ is a straightforward extension of the previous results, like before we will work with a Hamiltonian that is Unitarily equivalent to $H^{J_1,J_2}_{\Lambda,\mathbf{0}}$, we also introduce an external field $\mathbf{h}$ as before. Recall the unitary operator $U=\prod_{x\in\Lambda_B}e^{i\pi S_x^2}$, let
\begin{equation}
\widetilde{H}^U_{\Lambda,\mathbf{h}}=UH^{J_1,J_2}_{\Lambda,\mathbf{0}}U^{-1}-\sum_{x\in\Lambda}h_x\left((S_x^3)^2-\frac{1}{3}S(S+1)\right).
\end{equation}
 The effect of the unitary operator here is to replace $S_x^1$ and $S_x^3$ in $H^{J_1,J_2}_{\Lambda,\mathbf{0}}$ with $-S_x^1$ and $-S_x^3$ respectively. By using the representation \eqref{Qspin-matrix} we can write $\widetilde{H}^U_{\Lambda,\mathbf{0}}$ as
 \begin{equation}
 \begin{aligned}
 \widetilde{H}^U_{\Lambda,\mathbf{0}}=-\sum_{\{x,y\}\in\mathcal{E}}\bigg[J_1\Big((S_x^1-S_y^1)^2-&(S_x^2-S_y^2)^2+(S_x^3-S_y^3)^2\Big)
 \\
 &-J_2\Big(\mathcal{TR}\big[(Q_x-Q_y)^2\big]\Big)+C_\Lambda(J_1,J_2)\bigg].
 \end{aligned}
 \end{equation}
 Then similar to before we introduce the field $v$ and associated $3\times 3$ field of matrices $\mathbf{v}$, define
 \begin{align}
 \widetilde{H}(v)=&-\sum_{\{x,y\}\in\mathcal{E}}\bigg[J_1\Big((S_x^1-S_y^1)^2-(S_x^2-S_y^2)^2+(S_x^3-S_y^3)^2\Big)
 \\
 &\qquad\quad-J_2\Big(\mathcal{TR}\big[(Q_x+\frac{\mathbf{v}_x}{2}-Q_y-\frac{\mathbf{v}_y}{2})^2\big]\Big)+C_\Lambda(J_1,J_2)\bigg]-\frac{1}{4}(v,-\Delta v),\nonumber
 \\
 \widetilde{Z}(v)=&Tre^{-\beta \widetilde{H}(v)},
 \end{align}
 and
 \begin{align}
 \widetilde{H}'(v)=&\widetilde{H}(v)+\frac{1}{4}(v,-\Delta v),
 \\
 \widetilde{Z}'(v)=&Tr e^{-\beta \widetilde{H}'(v)}.
 \end{align}
 
 From this reflection positivity follow just as in Lemma 3.4, with the obvious changes to $A$ and $B$ and the extra terms
 \begin{equation}
 \begin{array}{l l}
 C^{10}_i=\sqrt{-J_1}S^1_{x_i}, \quad\quad & D^{10}_i=\sqrt{-J_1}S^1_{y_i},
 \\
 C^{11}_i=\sqrt{-J_1}iS^2_{x_i},\quad\quad & D^{11}_i=\sqrt{-J_1}iS^2_{y_i},
 \\
 C^{12}_i=\sqrt{-J_1}S^3_{x_i},\quad\quad & D^{12}_i=\sqrt{-J_1}S^3_{y_i},
 \end{array}
 \end{equation}
 (recall that $J_1<0$). From this we obtain the Gaussian domination inequality
 \begin{equation}
 \widetilde{Z}(v)\leq \widetilde{Z}(0)e^{\frac{\beta}{4}(v,-\Delta v)} \iff \widetilde{Z}'(v)\leq \widetilde{Z}'(0),
 \end{equation}
 just as before. We also obtain the same infrared bound as in Lemma 3.7, with an identical proof
 \begin{equation}
 \mathcal{F}\left((S_0^3)^2-\frac{1}{3}S(S+1),(S_x^3)^2-\frac{1}{3}S(S+1)\right)_{Duh}\leq \frac{1}{2\beta\varepsilon(k)}.
 \end{equation}
 Again the results up to here work for general $S\in\frac{1}{2}\mathbb{N}$, at this point we must specialise to $S=1$ to be able to calculate the quantities in the double commutator of the Falk-Bruch inequality.
 From this we can see that by using Falk-Bruch inequality with $A=\mathcal{F}\left((S_x^3)^2-\frac{2}{3}\right)(k)$ and $H=\beta \widetilde{H}^U_{\Lambda,\mathbf{0}}$ the linearity of the double commutator means that there will be an extra term in the analogous result to lemma 3.6 equal to $\langle J_1[A^*,[-2\sum_{\{x,y\}\in\mathcal{E}}(\mathbf{S}_x\cdot\mathbf{S}_y),A]]\rangle$. This will result in the IRB analogous to \eqref{IRB2} potentially being larger, weakening the result. If $|J_1|$ is small enough this weakening will not be too severe so as to make the lower bound analogous to the bound in theorem 3.1 negative in cases where we know the original lower bound was positive. This ensures that we have a positive lower bound $C=C(\beta,J_1)$ in Theorem 2.1 when $\beta$ and $|J_1|$ are small enough. It is worth noting that for the same reason as just described, extending the result of Dyson, Lieb and Simon \cite{D-L-S} to $J_2>0$ also requires that $|J_2|$ is small. This means the two results will not 
overlap, leaving part of the quadrant $J_1\leq0\leq J_2$ still open to investigation.
 
 \subsection*{Acknowledgments}
 
 I am pleased to thank my supervisor Daniel Ueltschi for his support and useful discussions. I am also grateful to the referee for several useful comments and observations. This work is supported by EPSRC as part of the MASDOC DTC at the University of Warwick. Grant No. EP/HO23364/1. 
 \newpage

\nocite{*}


\begin{thebibliography}{21}
\providecommand{\natexlab}[1]{#1}
\providecommand{\url}[1]{\texttt{#1}}
\expandafter\ifx\csname urlstyle\endcsname\relax
  \providecommand{\doi}[1]{doi: #1}\else
  \providecommand{\doi}{doi: \begingroup \urlstyle{rm}\Url}\fi

\bibitem[\emph{Affleck, I. and Kennedy, T. and Lieb, E.H. and Tasaki,
  H.}(1988)]{A-K-L-T}
\emph{Affleck, I. and Kennedy, T. and Lieb, E.H. and Tasaki, H.}
\newblock Valence bond ground states in isotropic quantum antiferromagnets.
\newblock \emph{Comm. Math. Phys.}, 115\penalty0 (3):\penalty0 477--528, 1988.

\bibitem[\emph{Aizenman, M. and Nachtergaele, B.}(1994)]{A-N}
\emph{Aizenman, M. and Nachtergaele, B.}
\newblock Geometric aspects of quantum spin states.
\newblock \emph{Comm. Math. Phys.}, 164\penalty0 (1):\penalty0 17--63, 1994.

\bibitem[\emph{Angelescu, N. and Zagrebnov, V.}(1982)]{A-Z}
\emph{Angelescu, N. and Zagrebnov, V.}
\newblock A lattice model of liquid crystals with matrix order parameter.
\newblock \emph{J. Phys. A: Math. Gen.}, 15:\penalty0 L639--L643, 1982.

\bibitem[\emph{Biskup, M. and Chayes, L.}(2003)]{B-C}
\emph{Biskup, M. and Chayes, L.}
\newblock Rigorous {A}nalysis of {D}iscontinuous {P}hase {T}ransitions via
  {M}ean-{F}ield {B}ounds.
\newblock \emph{Comm. Math. Phys.}, 238\penalty0 (1-2):\penalty0 53--93, 2003.

\bibitem[\emph{Conlon, J. G. and Solovej, J. P.}(1991)]{C-S}
\emph{Conlon, J. G. and Solovej, J. P.}
\newblock Upper bound on the free energy of the spin 1/2 {H}eisenberg
  ferromagnet.
\newblock \emph{Lett. Math. Phys.}, 23\penalty0 (3):\penalty0 223--231, 1991.

\bibitem[\emph{Dyson, F. J., Lieb, E. H. and Simon, B.}(1978)]{D-L-S}
\emph{Dyson, F. J., Lieb, E. H. and Simon, B.}
\newblock Phase transitions in quantum spin systems with isotropic and
  nonisotropic interactions.
\newblock \emph{J. Stat. Phys.}, 18\penalty0 (4):\penalty0 335--383, 1978.

\bibitem[\emph{Falk, H. and Bruch, L. W.}(1969)]{F-B}
\emph{Falk, H. and Bruch, L. W.}
\newblock Susceptibility and {F}luctuation.
\newblock \emph{Phys. Rev.}, 180:\penalty0 442--444, 1969.

\bibitem[\emph{Fröhlich, J., Israel, R., Lieb, E. and Simon,
  B.}(1978)]{F-I-L-S}
\emph{Fröhlich, J., Israel, R., Lieb, E. and Simon, B.}
\newblock Phase {T}ransitions and {R}eflection {P}ositivity. {I}. {G}eneral
  {T}heory and {L}ong {R}ange {L}attice {M}odels.
\newblock \emph{Comm. Math. Phys}, 62\penalty0 (1):\penalty0 1--34, 1978.

\bibitem[\emph{Fröhlich, J., Israel, R., Lieb, E. and Simon,
  B.}(1980)]{F-I-L-S2}
\emph{Fröhlich, J., Israel, R., Lieb, E. and Simon, B.}
\newblock Phase {T}ransitions and {R}eflection {P}ositivity. {II}. {L}attice
  systems with short-range and {C}oulomb interactions.
\newblock \emph{J. Stat. Phys.}, 22\penalty0 (3):\penalty0 297--347, 1980.

\bibitem[\emph{Fröhlich, J., Simon, B. and Spencer, T.}(1976)]{F-S-S}
\emph{Fröhlich, J., Simon, B. and Spencer, T.}
\newblock Infrared bounds, phase transitions and continuous symmetry breaking.
\newblock \emph{Comm. Math. Phys.}, 50\penalty0 (1):\penalty0 79--95, 1976.

\bibitem[\emph{Goldschmidt, C., Ueltschi, D. and Windridge, P.}(2011;
  arXiv:1104.0983)]{G-U-W}
\emph{Goldschmidt, C., Ueltschi, D. and Windridge, P.}
\newblock Quantum {H}eisenberg models and their probabilistic representations.
\newblock \emph{Entropy and the Quantum II, Contemp. Math.}, 552:\penalty0
  177--224, 2011; arXiv:1104.0983.

\bibitem[\emph{Harris, T. E.}(1972)]{H}
\emph{Harris, T. E.}
\newblock Nearest neighbour {M}arkov interaction processes on multidimensional
  lattices.
\newblock \emph{Adv. Math}, 9:\penalty0 66--89, 1972.

\bibitem[\emph{Kennedy, T. and Lieb, E.H. and Shastry, B.S.}(1988)]{K-L-S}
\emph{Kennedy, T. and Lieb, E.H. and Shastry, B.S.}
\newblock Existence of {N}éel order in some spin-1/2 {H}eisenberg
  antiferromagnets.
\newblock \emph{J. Stat. Phys.}, 53\penalty0 (5-6):\penalty0 1019--1030, 1988.

\bibitem[\emph{Mermin, N. D. and Wagner, H.}(1966)]{M-W}
\emph{Mermin, N. D. and Wagner, H.}
\newblock Absence of {F}erromagnetism or {A}ntiferromagnetism in {O}ne- or
  {T}wo-{D}imensional {I}sotropic {H}eisenberg {M}odels.
\newblock \emph{Phys. Rev. Lett.}, 17:\penalty0 1133--1136, 1966.

\bibitem[\emph{Nachtergaele, B.}(1994; arXiv:cond-mat/9312912)]{N}
\emph{Nachtergaele, B.}
\newblock Quasi-state decompositions for quantum spin systems.
\newblock \emph{Prob. theory and Math. Stat.}, 1:\penalty0 565--590, 1994;
  arXiv:cond-mat/9312912.

\bibitem[\emph{Ruelle, D.}(1969)]{R}
\emph{Ruelle, D.}
\newblock Statistical mechanics: {R}igorous results.
\newblock \emph{W.A. Benjamin, Inc., New York-Amsterdam}, 1969.

\bibitem[\emph{Tanaka, K. and Tanaka, A. and Idogaki, T.}(2001)]{T-T-I}
\emph{Tanaka, K. and Tanaka, A. and Idogaki, T.}
\newblock Long-range order in the ground state of the {S} = 1 isotropic
  bilinear-biquadratic exchange {H}amiltonian.
\newblock \emph{J. Phys. A: Math. Gen.}, 34\penalty0 (42):\penalty0 8767--8780,
  2001.

\bibitem[\emph{T\'oth, B.}(1993)]{T}
\emph{T\'oth, B.}
\newblock Improved lower bound on the thermodynamic pressure of the spin 1/2
  {H}eisenberg ferromagnet.
\newblock \emph{Lett. Math. Phys.}, 28\penalty0 (1):\penalty0 75--84, 1993.

\bibitem[\emph{T\'oth, B.}(1996)]{T-P}
\emph{T\'oth, B.}
\newblock Reflection positivity, infrared bounds, continuous symmetry breaking.
\newblock \emph{Prague lecture}, 1996.
\newblock URL \url{http://www.math.bme.hu/~balint/prague\_96/}.

\bibitem[\emph{Ueltschi, D.}(2014)]{U-F}
\emph{Ueltschi, D.}
\newblock Ferromagnetism, antiferromagnetism, and the curious nematic phase of {S}=1 quantum spin systems
\newblock \emph{arXiv:1406.2366}

\bibitem[\emph{Ueltschi, D.}(2013)]{U-T}
\emph{Ueltschi, D.}
\newblock Phase transitions in classical and quantum {H}eisenberg models.
\newblock \emph{Marseille lectures}, 2013.
\newblock URL \url{http://www.ueltschi.org/publications.php\#proc}.

\bibitem[\emph{Ueltschi, D.}(2013; arxiv:1301.0811)]{D2}
\emph{Ueltschi, D.}
\newblock Random {L}oop {R}epresentations {F}or {Q}uantum {S}pin {S}ystems.
\newblock \emph{J. Math. Phys.}, 54, 083301, 2013; arxiv:1301.0811.

\end{thebibliography}
\end{document}